\documentclass[a4paper,12pt]{amsart}
\usepackage{lmodern}
\usepackage[T1]{fontenc}
\usepackage[english]{babel}\frenchspacing
\usepackage{amsmath}
\usepackage{amsthm}
\usepackage{amsfonts}
\usepackage{amssymb}
\usepackage{amsaddr}
\usepackage{microtype} 
\usepackage{MnSymbol}
\usepackage{graphicx}
\usepackage{pict2e}
\usepackage{color}
\usepackage{array}
\usepackage{nth}
\usepackage{transparent}
\usepackage[left=3cm,right=3cm,top=2.8cm,bottom=2.5cm]{geometry}
\usepackage{setspace}
\usepackage{nicefrac}
\usepackage{commath}
\usepackage{natbib}
\usepackage[pdftex]{hyperref}
\usepackage{stmaryrd}
\usepackage{enumerate}
\usepackage{enumitem}
\usepackage[stable]{footmisc}

\hypersetup{colorlinks=true,citecolor=blue,urlcolor=blue,linkcolor=black}
\linespread{1.30}

\theoremstyle{definition}
\newtheorem{exmp}{Example}

\theoremstyle{plain}
\newtheorem*{thm*}{See Theorem 1 in \cite{mas1992equilibrium}}
\newtheorem{theorem}{Theorem}
\newtheorem{proposition}{Proposition}

\newtheorem{corollary}{Corollary}

\newtheorem{fact}{Fact}
\newtheorem{claim}{Claim}

\renewcommand{\bfdefault}{b}

\DeclareSymbolFont{Symbols}{OMS}{cmsy}{m}{n}
\DeclareMathSymbol{\Emptyset}{\mathord}{Symbols}{"3B}

\author{\footnotesize{Christian  Basteck$^\dagger$}} 

\thanks{$^\dagger$WZB Berlin, Berlin, Germany. \href{mailto: christian.basteck@wzb.eu}{\tt  christian.basteck@wzb.eu}\\I thank Nick Arnosti, Anna Bogomolnaia, Sean Horan,  Bettina Klaus, Hervé Moulin, Szilvia Pápai, Ludvig Sinander, Alex Teytelboym, William Thomson, Alexander Westkamp and, in particular, Lars Ehlers for inspiring discussions and helpful comments. Financial support by the Deutsche Forschungsgemeinschaft through CRC TRR 190 (project number 280092119) is gratefully acknowledged.}

\title{An Axiomatization of the Random Priority Rule}

\begin{document}

\begin{abstract}
We study the problem of assigning indivisible objects to agents where each is to receive at most one. To ensure fairness in the absence of monetary compensation, we consider random assignments. Random Priority, also known as Random Serial Dictatorship, is characterized by equal-treatment-of-equals, ex-post efficiency and probabilistic (Maskin) monotonicity -- whenever preferences change so that a given deterministic assignment is ranked weakly higher by all agents, the probability of that assignment arising should be weakly larger. Probabilistic monotonicity implies strategy-proofness (in a stochastic dominance sense) for random assignment problems  and is equivalent to it on the universal domain of strict preferences; for deterministic rules it coincides with Maskin monotonicity. 
  
\noindent \emph{Keywords:} Random Assignment; Random Priority; Random Serial Dictatorship; Ex-Post Efficiency; Probabilistic Monotonicity; Maskin Monotonicity\\
\emph{JEL codes:} C70, C78, D63
\end{abstract}

\maketitle

\renewcommand{\bfdefault}{b}

\section{Introduction}
Many allocation problems require us to assign indivisible objects to agents such that each agent receives at most one object -- public housing associations assign apartments to tenants, education administrations match teachers to schools, and municipalities assign daycare spots to children.
In the absence of compensating transfers, and in light of the indivisible nature of objects, a desire to treat agents equally leads us to consider randomized assignments.

A random assignment rule prescribes a lottery over deterministic assignments for any possible profile of agents' (strict) preferences over objects. 
Arguably one of the simplest such rules is known as the Random Priority rule ($R\!P$, also known as random serial dictatorship) implemented for example by the following extensive form mechanism: order agents uniformly at random and let each agent, one after another according to the realized ordering, choose their most-preferred among all objects still available. 

It is readily seen to be fair, efficient, and incentive compatible: it treats agents with identical preferences equally from an ex ante point of view,  all assignments arising with positive probability are Pareto-efficient, and choosing the most-preferred available object according to one's true preferences is a dominant strategy -- thus, if the rule is implemented as a direct mechanism, where agents are asked to report their preferences before the mechanism chooses optimally on their behalf, it is strategy-proof.

This paper identifies another normatively appealing, property:  the Random Priority rule is responsive to agents preferences, in that an assignment is more likely to be chosen once agents consider it more preferred. More precisely, our new axiom, probabilistic (Maskin) monotonicity, considers different preference profiles $R, R'$ and an assignment $\mu$ where for each agent   an assignments preferred to $\mu$ under $R'$ is also preferred to $\mu$ under $R$ -- hence, compared to $R$, the assignment $\mu$ has moved up in each agents' ranking of assignments. In any such situation, probabilistic monotonicity requires that the assignment $\mu$ arises with weakly larger probability under $R'$ than under $R$.

As our main result, we find that the random priority rule is the unique random assignment rule satisfying equal-treatment-of-equals, (ex post) efficiency, and probabilistic monotonicity. Moreover, this characterization holds both in an ordinal framework, where agents' preferences are represented by rank order list of objects, as well as in a cardinal framework where preferences are described by von Neumann–Morgenstern utility functions.

Note that this characterisation does not invoke strategy-proofness. This is because probabilistic monotonicity can be seen as a natural strengthening of strategy-proofness, as we argue in our second proposition. In fact, on the universal domain of strict preferences,\footnote{In contrast, our setting assumes that agents are indifferent between different assignments as long as they themselves receive the same object.} probabilistic monotonicity and strategy-proofness are equivalent
-- hence our characterisation of random priority (also known as random serial dictatorship) may be seen as the counterpart to the classic characterisation of random dictatorship as the only random social choice rule on such domains satisfying (ex post) efficiency and strategy-proofness due to \cite{gibbard1977manipulation}. 

Since probabilistic monotonicity is stronger that strategy-proofness on the domain of assignment problems, one may ask whether it can be replaced by strategy-proofness in conjunction with an additional property. This is indeed the case: replacing probabilistic monotonicity with strategy-proofness and a weak, object-wise non-bossiness notion also yields an alternative characterization of $R\!P$ (Theorem \ref{th.ax}).

Many authors have recognized the exceptional position that $R\!P$ occupies among random assignment rules. \cite{abdulkadirouglu1998random} and \cite{knuth1996exact} independently showed that it is not only implemented by a random serial dictatorship as described above but equivalently as the core from random endowments.\footnote{For this, they assume an equal number of agents and distinct objects.} Following up on this surprising result, several authors have shown how randomization over certain deterministic rules, so as to symmetrize the treatment of agents, yields the random priority rule \citep{pathak2011lotteries,lee2011equivalence,carroll2014general,bade2020random}:\footnote{All these assume a unit supply of objects, i.e., do not consider the possibility of copies of objects between which agents are indifferent.} 
in the most general formulation, due to \cite{bade2020random}, symmetrizing any strategy-proof, non-bossy, and efficient deterministic rule yields $R\!P$.

Interestingly, for deterministic rules, strategy-proofness and non-bossiness are equivalent to group-strategy-proofness \citep{pycia2023ordinal} and thus equivalent to Maskin monotonicity \citep{takamiya2007domains}.\footnote{\cite{pycia2023ordinal} and \cite{takamiya2007domains} consider domains that are general enough to allow for the possibility of multiple object copies; the same is the case for our characterization. For a setting with unit supplies, the first equivalence was shown by \cite{papai2000strategyproof} (allowing for an unequal number of objects and agents) and the second by \cite{takamiya2001coalition} (assuming an equal number of objects and agents, i.e., a housing market as in \cite{shapley1974cores}).}

Hence, as randomization over efficient, strategy-proof, and non-bossy (and hence Maskin monotonic) deterministic rules preserves efficiency, in the form of ex post efficiency, and monotonicity, in the form of probabilistic monotonicity, our characterization implies and unifies the previous equivalence results from \cite{abdulkadirouglu1998random} to \cite{bade2020random} (Corollary \ref{cor.eq}).

Further, \cite{bade2016fairness} shows any ex post efficient and symmetric random assignment rule to necessarily violate group-strategy-proofness\footnote{\cite{zhang2019efficient} strengthens this implication by showing that such rules are strongly group manipulable under some mild additional fairness axioms.} so that randomizing over group strategy proof (and ex post efficient) rules to arrive at a symmetrized random assignment rule entails a loss of group strategy-proofness. In contrast, Maskin monotonicity, equivalent to group strategy proofness for deterministic rules, naturally generalizes to probabilistic monotonicity and is in that form preserved under randomization.

\cite{erdil2014strategy} was the first to show that $R\!P$ is not characterized by equal-treatment-of-equals, ex-post efficiency and strategy-proofness in the presence of outside options -- a long-standing conjecture since \cite{bogomolnaiaMoulin2001} proved this to be the case for 3 agents and 3 acceptable objects. \cite{basteck2024constrained} show $R\!P$ not to be characterized by these properties even in the absence of outside options. Hence, to characterize $R\!P$ in conjunction with equal-treatment-of-equals and ex post efficiency, we are forced to strengthen strategy-proofness, for example to probabilistic monotonicity. 

\cite{pycia2024random} consider another strengthening of strategy-proofness -- they show that an obviously strategy-proof mechanism  is symmetric
and (ex post) efficient if and only if it implements $R\!P$. In other word, their characterization strengthens strategy-proofness to obvious strategy proofness (OSP)\citep{li2017obviously}. OSP and probabilistic monotonicity are logically independent. Note that the characterization or $R\!P$ provided by \cite{pycia2024random} refers to the \emph{possibility} of implementing it in an obviously strategy-proof way -- if instead agents are asked to submit their preferences in advance before the outcome is then computed by means of the random priority rule, obvious strategy proofness will no longer be satisfied.\footnote{For example, this procedure is used to assign students to secondary schools in the city of Amsterdam: \href{https://schoolkeuze020.nl/plaatsing/}{https://schoolkeuze020.nl/plaatsing/}} Nonetheless, the latter procedure still ensures probabilistic monotonicity.

Finally, (weak) Maskin monotonicity allows to characterize two of the most prominent families of deterministic assignment rules. \cite{kojima2010axioms} characterize deferred acceptance rules (DA) by non-wastefulness, weak Maskin monotonicity,\footnote{An assignment rule satisfies weak Maskin monotonicity if an assignment that is chosen at some preference profile $R$ and ranked weakly higher by all agents at another profile $R'$ will either still be chosen or replaced by another assignment that Pareto-dominates it, i.e., ranked even higher by all agents.} and population monotonicity. 
Moreover, if weak Maskin monotonicity is strengthened to Maskin monotonicity, the only assignment rules satisfying the three properties are efficient deferred acceptance rules.\footnote{I.e., DA with  Ergin-acyclic priorities \citep{ergin2002efficient} or that satisfies bounded invariance \cite{basteck2024constrained}.} \cite{morrill2013alternative} characterizes top-trading-cycles rules (TTC) by efficiency, weak Maskin
monotonicity, independence of irrelevant rankings, and mutual best. 
Since TTC-rules satisfy Maskin monotonicity \citep{papai2000strategyproof}, one can replace weak Maskin monotonicity by monotonicity in the characterizations. Note that while DA and TTC satisfy Maskin monotonicity -- and hence probabilistic monotonicity -- neither are OSP-implementable in general.


The paper is organized as follows: Section \ref{technicalities} introduces the model; Section \ref{res} presents the results and discusses probabilistic monotonicity in more detail; Section \ref{conclusion} concludes with some remarks on non-symmetric assignment rules.

\section{Model}\label{technicalities}

\subsection{Agents, objects, preferences}\label{AOR}
Let $N$ be a set of agents and $O$ be a set of distinct objects (or object types). For each object $x\in O$, let $q_x$ denote the number of copies. If one wants to allow for the possibility that some agents are not assigned any proper object, we may do so by introducing a null-object, denoted $\emptyset\in O$,\footnote{One may also interpret $\emptyset$ as representing agent-specific outside options.} 
and set $q_\emptyset=|N|$. Thus, it is without loss of generality to assume $\sum_{x\in O} q_x\geq |N|$.

%

Each agent $i\in N$ 
holds strict preferences over the set of objects $O$, represented by a linear order\footnote{I.e., complete, transitive, and antisymmetric ($xR_iy$ and $yR_ix$ together imply $x=y$).}~$R_i$ whose asymmetric part we denote by $P_i$.\footnote{That is, $xP_iy$ whenever $xR_i y$ and $x\neq y$.}  Let $\mathcal{R}^i$ denote the set of all strict preferences of agent $i$ over $O$ and let $\mathcal{R}=\times_{i\in N}\mathcal{R}^i$ denote the set of all preference profiles $R=(R_1,\ldots,R_n)$.  Refer to $\mathcal{R}$ as the unrestricted domain of strict preferences. In contrast, if $O$ includes the null object $\emptyset$ and if, \emph{in addition}, we restrict agents preferences by requiring that $\emptyset$ is ranked last by each agent, we denote this restricted domain as $\mathcal{\underline{R}}$. 
For a given object $x$ and given $R_i$ we write $U(x,R_i)=\{y\in O| yP_i x\}$ for $i$'s (strict) upper contour set at $x$ and denote the (strict) lower contour set as $L(x,R_i)=\{y\in O| xP_i y\}$. 

For any $R_i\in \mathcal{R}^i$ and any $x,y\in O$ adjacent in $R_i$\footnote{Two objects $x,y\in O$ are adjacent in $R_i$, if $U(x,R_i)\backslash\{x,y\}=U(y,R_i)\backslash\{x,y\}$.} we say that $R_i'$ differs from $R_i$ by a swap of $x$ and $y$ if $R_i'|O\backslash\{x\}=R_i|O\backslash\{x\}$, $R_i'|O\backslash\{y\}=R_i|O\backslash\{y\}$ and $yR_i'x \Leftrightarrow xR_i y$. Let $N_{R_i}\subset \mathcal{R}^i$ denote the set or preferences in the neighbourhood of $R_i$, i.e., the set of all preferences that differ from $R_i$ by a swap of two objects adjacent in $R_i$.

\subsection{Random Assignments}\label{ran.ass}

A (deterministic) assignment is a mapping $\mu:N\rightarrow O$ 
 such that each object $x\in O$ is assigned to at most $q_x$ agents, i.e., $|\{i\in N|\mu_i=x\}|\leq q_x$.\footnote{We will write $\mu_i$ instead of $\mu(i)$ for any $i\in N$.}
Let $\mathcal{M}$ denote the set of all assignments. An assignment $\mu\in\mathcal{M}$ is efficient under $R$ if there exists no $\mu'\in \mathcal{M}$ such that $\mu_i'R_i\mu_i$ for all $i\in N$ and $\mu_j'P_j\mu_j$ for some $j\in N$. 
We write $\mathcal{PO}(R)$ for the set of all assignments efficient under $R$.

Let $\Delta(\mathcal{M})$ denote the set of all probability distributions over $\mathcal{M}$ to which we refer to as random assignments. For a given $p\in \Delta(\mathcal{M})$ and $\mu\in \mathcal{M}$, let $p_\mu$ denote the probability assigned to $\mu$ under $p$ and refer to $supp(p)=\{\mu \in \mathcal{M}|p_\mu >0\}$ as the support of $p$. 
We say that a random assignment $p\in \Delta(\mathcal{M})$ is \emph{ex-post efficient} under $R$ if all assignments in its support are efficient, i.e.,  $supp(p)\subseteq \mathcal{PO}(R)$. 

A random assignment $p$ satisfies \emph{equal-treatment-of-equals} if for any two agents $i,j\in N$ such that $R_i=R_j$ and any two deterministic assignments $\mu,\mu'$ such that $\mu_i=\mu'_j$, $\mu_j=\mu'_i$ and $\mu_k=\mu_k'$ for all $k\in N\backslash\{i,j\}$ we have $p_{\mu}=p_{\mu'}$. In other words, deterministic assignments that merely permute the assignments of agents with identical preferences arise with the same probability.\footnote{Alternatively, if one does not wish to distinguish between random assignments that are welfare equivalent, one may define equal-treatment-of-equals with respect to agents' individual object assignment probabilities rather than the probabilities of choosing complete assignments -- see Appendix \ref{app.axioms.we}.}

A random assignment rule $f$ maps preference profiles to random assignment, i.e., $f: \mathcal{R} \rightarrow \Delta(\mathcal{M})$. For a given $R\in \mathcal{R}$ and $\mu\in \mathcal{M}$, let $f_{\mu}(R)$ denote the probability assigned to $\mu$ under $f(R)$. We say that $f$ is ex-post efficient, if $f(R)$ is ex-post efficient for every $R\in \mathcal{R}$. It satisfies equal-treatment-of-equals, if $f(R)$ satisfies equal-treatment-of-equals for every $R\in \mathcal{R}$. It is deterministic, if $|supp(f(R))|=1$ for every $R\in \mathcal{R}$.

\medskip
Besides these intra-profile axioms we will consider four inter-profile axioms that describe how random assignments should respond to changes in agents' preferences.
The first such axioms demands that an assignment should be chosen with higher probability as it becomes more preferred in the eyes of all agents. For that consider two preference profiles $R, R'\in \mathcal{R}$ and a deterministic assignment $\mu$. We say that $R'$ is a $\mu$-monotonic transformation of $R$ if for agents' upper contour sets we have $U(\mu_i, R'_i)\subseteq  U(\mu_i, R_i)$ for all $i$, i.e., if all agents rank their assignment under $\mu$ weakly higher under $R_i'$ than under $R_i$. We say that a random assignment rule $f$ satisfies \emph{probabilistic (Maskin) monotonicity} if for any $R,R' \in \mathcal{R}$ and $\mu \in \mathcal{M}$ such that $R'$ is a $\mu$-monotonic transformation of $R$ we have $f_{\mu}(R') \geq f_{\mu}(R)$.

Note that if $f$ is deterministic, probabilistic monotonicity reduces to Maskin monotonicity for deterministic, single valued rules.\footnote{A deterministic assignment rule $f$ is said to satisfy Maskin monotonicity if for any two profiles $R,R'\in \mathcal{R}$ and assignment $\mu\in \mathcal{M}$ such that $R'$ is a $\mu$-monotonic transformation of $R$, we have that $f(R)=\mu$ implies $f(R')=\mu$.} 

Probabilistic monotonicity strengthens (stochastic dominance) strategy-proofness (see Proposition \ref{prop.decomp}). To define the latter formally, let $p^{ia}$ denote the probability with which agent $i$ is assigned object $a$ under random assignment $p$, and let $p^i=(p^{ia})_{a\in O}$ denote $i$'s individual random assignment. Moreover, given any two random assignments $p,q\in \Delta(\mathcal{M})$ as well as some agent $i$'s preference $R_i$, we say that $p^i$ stochastically $R_i$-dominates $q^i$ if for all $x\in O$,
\[ \sum_{y\in U(x,R_i)} p^{iy} \geq \sum_{y\in U(x,R_i)} q^{iy}. \]
A random assignment $p$ stochastically $R$-dominates another random assignment $q$ if $p^i$ $R_i$-dominates $q^i$ for all $i\in N$. A random assignment rule $f$ is said to be \emph{(sd)-strategy-proof}  if for all $R\in \mathcal{R}$, all $i\in N$ and all $R_i'\in \mathcal{R}^i$,
$f^i(R)$ stochastically $R_i$-dominates $f^i(R_i',R_{-i})$.\footnote{Sd-strategy-proofness is equivalent to the requirement that for any von Neumann–Morgenstern utility function  compatible with a given ordinal ranking of objects, submitting the true ordinal ranking maximizes an agent's expected utility.}

For deterministic assignment rules, non-bossiness demands that if a change in an agents' preferences does not change their own assigned object, it should not change the overall assignment of objects to (other) agents. Our next axiom generalizes this requirement to random assignment rules:  A random assignment rule $f$ is \emph{object-wise non-bossy} if for all $z\in O$, $R, R'_i\in \mathcal{R}$  we have \[f^{iz}(R)=f^{iz}(R'_i,R_{-i})\implies f_{\mu}(R)=f_{\mu}(R'_i,R_{-i}), \forall \mu\in \{\mu'\in \mathcal{M}|\mu'_i=z\}.\] In words, if the change in one agent's preferences does not affect the probability with which they are assigned a particular object $z$, then, conditional on her being assigned that object, it does not affect the probability distribution over (all agents') assignments. Note that for deterministic assignment rules, object-wise non-bossiness is equivalent to non-bossiness, since it has implications only if there is some $z$ such that $f^{iz}(R)=f^{iz}(R'_i,R_{-i})=1$, in which case it demands that the unique assignment $\mu$ for which  $f_{\mu}(R)=1$, arises with probability $1$ also under $f(R'_i,R_{-i})$. 

Weakening the axiom in that it only needs to hold when the change in an agents' preferences is due to a swap of two adjacently ranked objects, we say that a random assignment rule $f$ is \emph{weakly object-wise non-bossy}\footnote{For deterministic, \emph{strategy-proof} assignment rules, weak object-wise non-bossiness can again be shown to be equivalent to non-bossiness.} if for all $x,y,z\in O$, $R\in \mathcal{R}$, and $R'_i\in N_{R_i}$ such that $xP_iy$ and $yP'_ix$  we have \[f^{iz}(R)=f^{iz}(R'_i,R_{-i})\implies f_{\mu}(R)=f_{\mu}(R'_i,R_{-i}), \forall \mu\in \{\mu'\in \mathcal{M}|\mu'_i=z\}.\]

The last intra-profile axiom was introduced by \cite{gibbard1977manipulation} in a setting where agents have strict preferences over all possible outcomes.\footnote{In contrast, in the assignment problem agents have strict preferences over their own assignment.}  A random assignment rule $f$ is \emph{pairwise responsive}  if for all $x,y\in O$, $R\in\mathcal{R}$, $R'_i\in N_{R_i}$ such that $xP_iy$ and $yP'_ix$, and $\mu\in \mathcal{M}$  we have \[\mu_i\notin \{x,y\}\implies f_\mu(R)=f_\mu(R'_i,R_{-i}).\] In words, swapping two adjacently ranked objects in the preference order of some agent may affect the probability of assignments where that agent is assigned one of the two objects but should not change the probability of any assignment where the agent is assigned some other object.

\medskip

One of the most prominent random assignment rules is the Random Priority rule ($R\!P$), also known as random serial dictatorship. For that let $\rhd$ denote a strict priority order over $N$ and let $\Pi$ denote the set of all strict priority orders.
Given $\rhd\in \Pi$, let $f^{\rhd}$ denote the deterministic priority (or serial dictatorship) rule where agents are assigned their most-preferred among all available objects in order of their priority.\footnote{For any $R\in \mathcal{R}^N$ and $i_1\rhd i_2 \rhd \cdots \rhd i_n$, $i_1$ receives her  $R_{i_1}$-most-preferred object in $O$ (denoted by $f_{i_1}^{\rhd}(R)$), and for $l=2,\ldots,n$, $i_l$ receives her  $R_{i_l}$-most-preferred object in $O\backslash \{f_{i_1}^{\rhd}(R),\ldots,f_{i_{l-1}}^{\rhd}(R)\}$ (denoted by $f_{i_l}^{\succ}(R)$).}
Then the Random Priority rule is defined by $R\!P(R) =\frac{1}{n!}\sum_{\rhd\in \Pi}f^{\rhd}(R)$ for all $R\in \mathcal{R}$. Note that any $f^\rhd$ satisfies Maskin monotonicity \citep{kojima2010axioms}, and hence, $R\!P$ satisfies probabilistic monotonicity. Analogously, one finds that as any $f^\rhd$ is strategy-proofness, so is $R\!P$.

\section{Results} \label{res}
\subsection{Axiomatisation}\label{sec.ax}

We are now set to state our main result: ex-post efficiency, equal-treatment-of-equals and some (combination of) inter-profile axiom(s) characterize the Random Priority Rule. Recall that $\mathcal{R}$ denotes the unrestricted domain of strict preferences while, in the presence of a null-object,  $\mathcal{\underline{R}}\subsetneq \mathcal{R}$
 is restricted by the assumption that all agents rank the null-object last.\footnote{While we defined a random assignment rule, as well our axioms, with respect to $\mathcal{R}$, all definitions carry over to the domain $\mathcal{\underline{R}}$ by simply replacing $\mathcal{R}$ by $\mathcal{\underline{R}}$.}
\begin{theorem}\label{th.ax}
Consider a random assignment rule $f$ on $\mathcal{R}$ or $\mathcal{\underline{R}}$,  satisfying ex post efficiency and equal-treatment-of-equals. The following statements are equivalent:

The random assignment rule $f$
\begin{enumerate}
\item satisfies probabilistic monotonicity.
\item satisfies both strategy-proofness and weak object-wise non-bossiness.
\item satisfies pairwise responsiveness.
\item is the Random Priority Rule (on $\mathcal{R}$ or $\mathcal{\underline{R}}$, respectively).
\end{enumerate}

\end{theorem}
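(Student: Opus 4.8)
The plan is to establish the chain of equivalences by proving that $(4)\Rightarrow(1)\Rightarrow(2)\Rightarrow(3)\Rightarrow(4)$, where the first implication is essentially already recorded in the excerpt (any $f^\rhd$ is Maskin monotonic, hence $R\!P$ is probabilistically monotone), so the real work lies in the three remaining arrows together with the hardest direction, namely that the axioms force $f$ to coincide with $R\!P$. For $(1)\Rightarrow(2)$, I would invoke the decomposition announced as Proposition \ref{prop.decomp}: probabilistic monotonicity should factor into (sd)-strategy-proofness plus a non-bossiness-type condition, so I would show directly that a $\mu$-monotonic transformation obtained by a single adjacent swap that leaves agent $i$'s assigned object $z$ unchanged cannot alter the probability of any $\mu$ with $\mu_i=z$; the trick is to apply probabilistic monotonicity in both directions of the swap and combine the two weak inequalities into the required equality, which simultaneously yields weak object-wise non-bossiness, while strategy-proofness follows by summing probabilistic-monotonicity inequalities over upper contour sets. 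For $(2)\Rightarrow(3)$, I would argue that for an adjacent swap of $x,y$ in agent $i$'s preferences, strategy-proofness pins down the individual marginal $f^{iz}(R)$ for every object $z\notin\{x,y\}$ to be unchanged (since the swap does not affect the relative ranking of $z$ against anything, stochastic dominance in both directions forces equality of each marginal outside the swapped pair), and then weak object-wise non-bossiness upgrades this equality of marginals to equality of the full assignment probabilities $f_\mu$ for every $\mu$ with $\mu_i=z$, which is precisely pairwise responsiveness.

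The substantive content sits in $(3)\Rightarrow(4)$, and here my approach is to show that pairwise responsiveness, together with ex post efficiency and equal-treatment-of-equals, determines $f$ uniquely, and that $R\!P$ satisfies all three so the unique solution must be $R\!P$. Since any two strict preference profiles are connected by a finite sequence of adjacent swaps, pairwise responsiveness lets me transport information about assignment probabilities across the whole domain: changing agent $i$'s preferences by one adjacent swap fixes $f_\mu(R)$ for every $\mu$ in which $i$ receives neither of the two swapped objects. The plan is to fix a target profile $R$ and an assignment $\mu$ and to deform $R$, one swap at a time, into a profile $\hat R$ that is especially tractable for $\mu$ — for instance a profile at which ex post efficiency and equal-treatment-of-equals already force the value $f_\mu(\hat R)$, while every swap along the path leaves the relevant agent's object outside the swapped pair so that $f_\mu$ is preserved. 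A natural anchor is a profile where agents' top choices are arranged so that serial dictatorship is essentially forced: ex post efficiency rules out assignments outside $\mathcal{PO}(R)$, and equal-treatment-of-equals fixes the split among agents with identical preferences, and I would leverage these intra-profile constraints at the anchor profiles to compute $f$ there exactly.

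The main obstacle I anticipate is controlling the deformation paths in $(3)\Rightarrow(4)$: for a fixed assignment $\mu$ I need a sequence of adjacent swaps that (i) connects $R$ to a profile where $f_\mu$ is computable from efficiency and symmetry alone, and (ii) never swaps, for the pivotal agent, a pair containing the object that agent receives under $\mu$, so that pairwise responsiveness keeps $f_\mu$ invariant along the entire path. Different assignments $\mu$ in the support may require different anchor profiles, so I would likely proceed by an induction — for example on the number of agents, peeling off one dictator at a time — reducing a size-$n$ problem to a size-$(n-1)$ problem on the remaining agents and objects, matching exactly the recursive structure of $R\!P(R)=\frac{1}{n!}\sum_{\rhd\in\Pi}f^\rhd(R)$. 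Making the induction watertight requires that after conditioning on the first agent's assignment, the conditional rule on the residual population still inherits ex post efficiency, equal-treatment-of-equals and pairwise responsiveness, which I would verify as the key lemma before the induction can close.
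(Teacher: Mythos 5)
Your arrows $(4)\Rightarrow(1)$, $(1)\Rightarrow(2)$ and $(2)\Rightarrow(3)$ track the paper's proof closely and are essentially correct. One small caveat in $(1)\Rightarrow(2)$: ``applying probabilistic monotonicity in both directions of the swap'' only settles the case $z\notin\{x,y\}$ of weak object-wise non-bossiness. For $z\in\{x,y\}$ the axiom gives only the one-sided inequalities $f_\mu(R)\leq f_\mu(R_i',R_{-i})$ for each $\mu$ with $\mu_i=y$, and you must combine them with the hypothesis $f^{iy}(R)=f^{iy}(R_i',R_{-i})$: if any single inequality were strict, the sum over all such $\mu$ would be strict too. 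That summing step is needed and is in the paper.

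The genuine gap is in $(3)\Rightarrow(4)$. You correctly identify the obstacle --- finding swap paths to an anchor profile that never touch the pivotal agent's assigned object --- but the device you propose to overcome it does not work. Conditioning on agent $i$ receiving object $a$ does not yield a residual rule inheriting pairwise responsiveness: if some $j\neq i$ swaps $x,y$ with $\mu_j\notin\{x,y\}$, pairwise responsiveness fixes $f_\mu$ for those $\mu$, but the probability of the conditioning event $\{\mu:\mu_i=a\}$ also aggregates assignments with $\mu_j\in\{x,y\}$ and can change, so the conditional probabilities move. Nor is there a well-defined ``first dictator'' to peel off, since under $R\!P$ the ordering is random and an agent may receive her top object without choosing first. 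The paper resolves the path-control problem differently: it argues by contradiction, assuming $f\neq R\!P$, and iteratively shrinks the domain on which a disagreement survives. At each stage it takes an efficient $\mu$ with $f_\mu\neq R\!P_\mu$, realizes it as a serial-dictatorship outcome, identifies the last object $y$ chosen outside the already-processed set $O_k$, and pushes $y$ to just above $O_k$ for every agent not assigned $y$ (legitimate by pairwise responsiveness since those swaps avoid $\mu_j$). For the agents assigned $y$ it runs a two-case analysis: either some surviving disagreement assigns them something else, so $y$ can be pushed down (or up) for them as well, with ex post efficiency ruling out the problematic configuration via a Pareto-improving trade; or every surviving disagreement assigns them $y$, and their preferences are frozen. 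The induction terminates either with all remaining ``free'' agents holding identical preferences, whence equal-treatment-of-equals for $R\!P$ and for $f$ yields the contradiction $f_\mu>R\!P_\mu=R\!P_{\mu'}>f_{\mu'}$, or with a single assignment differing in probability, contradicting that probabilities sum to one. Your proposal as written does not supply a substitute for this two-case domain-restriction argument, so the hardest implication remains open.
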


The characterization based on probabilistic monotonicity might be seen as the most compelling on normative grounds, as the axiom requires a rule to be responsive to changes in agents' preferences: an assignment should arise with weakly higher probability as it becomes more preferred in the eyes of agents. The second characterization may be seen as useful in that it pins down the difference between the Random Priority Rule and other random assignment rules that satisfy equal-treatment-of-equals, ex post efficiency and strategy-proofness. For example \cite{erdil2014strategy} constructs a rule on the domain of random assignment problems with outside options that (weakly) stochastically $R$-dominates the random priority rule at every profile $R$ in the domain. Similarly, \cite{basteck2024constrained} construct a rule for assignment problems without outside options that dominates the Random Priority Rule for some profiles and yields assignments where agents receive a higher ranked object on average -- if one wants to further explore possibilities of welfare improvements among random assignment rules that satisfy equal-treatment-of-equals, ex post efficiency, and strategy-proofness, Theorem \ref{th.ax} delineates the set of rules to consider: those that violate weak object-wise non-bossiness. Last, the characterization based on pairwise responsiveness pinpoints the precise property, implied by both (1) probabilistic monotonicity and (2) strategy-proofness in conjunction with weak object-wise non-bossiness, that is used in the main step of the proof.

The proof proceeds as follows. We first verify that $(4)\implies (1)$, $(1)\implies (2)$, and $(2)\implies(3)$ -- these steps are straightforward and relegated to appendix. The crucial step is to verify that $(3)\implies (4)$ -- which we sketch below for the case of an equal number of agents and objects and with unit supply of each object, i.e., $|N|=|O|=n$ and $q_x=1$ for all $x\in O$. 

First, let us introduce some additional notation for the set of assignments that arise with different probability under $f$ and $R\!P$ -- for any $R\in \mathcal{R}$, \[\mathcal{M}^\neq(R):=\{\mu\in \mathcal{M}| f_\mu(R)\neq R\!P_\mu (R)\}.\]
Second, observe that ex-post efficiency implies $f_\mu(R)=0= R\!P_{\mu}(R)$ whenever $\mu$ is inefficient (given $R$). Hence, whenever $\mu\in\mathcal{M}^\neq(R)$  for some $\mu$ and $R$, we know that $\mu$ is efficient at $R$.  Third, recall that any assignment $\mu$, efficient at $R$, is the outcome of some (not necessarily unique) deterministic priority rule at $R$ where agents choose their most preferred object among those still available according to some deterministic priority order.

So towards a contradiction suppose $f\neq R\!P$, i.e., that there exists some $R$ and $\mu\in \mathcal{M}^{\neq}(R)$. By the above, we know that $\mu$ must be efficient and hence the outcome under some deterministic priority rule. Let $i$ be the agent who chooses last under the associated priority order and let $\mu_i=z\in O$. Since $i$ chose last, we know that $\mu_j P_j z$ for all $j\neq i$. 

\emph{Step A:} Now consider $R'$ where all $j\neq i$ move $z$ to the bottom of their preference order, in a series of pairwise swaps of adjacent objects. Since these do not involve $\mu_j$, pairwise responsiveness implies that $f_\mu(R')=f_\mu(R)\neq R\!P_{\mu}(R)=R\!P_{\mu}(R')$. Hence, $f$ and $R\!P$ differ not only for some profile $R\in \mathcal{R}$ but for some profile in a restricted domain  where most agents agree on the ranking of some object $z$, namely  \[\mathcal{\tilde{R}}=\{R\in \mathcal{R}|\forall j\neq i, x\in O: xR_jz\}.\]

\emph{Step B:} Next, we will show that we may further narrow down this subdomain while preserving $f\neq R\!P$ (for some profile of the subdomain) -- either there exist profiles where $i$ also ranks $z$ last and for which $f$ and $R\!P$ differ, or we can fix agent $i$'s preferences and know that under any assignment arising with different probability under $f$ and $R\!P$ (at some profile of the remaining subdomain), $i$ will be assigned $z$.

\smallskip
Case (i): Suppose there exists $R\in \mathcal{\tilde{R}}$ and $\mu'\in \mathcal{M}^{\neq}(R)$ such that $\mu'_i P_i z$. Then we may move $z$ to the bottom of $i$'s preference order, in a series of pairwise swaps of adjacent objects. Call the new profile $R^1$. Since the swaps did not involve $\mu'_i$, pairwise responsiveness implies that $f_{\mu'}(R^1)=f_{\mu'}(R)\neq R\!P_{\mu'}(R)=R\!P_{\mu'}(R^1)$. Hence, $f$ and $R\!P$ differ for some profile in the restricted domain  where all agents, including $i$, rank object $z$ last, namely  \[\mathcal{R}^1=\{R\in \mathcal{\tilde{R}}|\forall  x\in O: xR_iz\}.\]

\medskip
Case (ii): Suppose there exists \emph{no} $R\in \mathcal{\tilde{R}}$ and $\mu'\in \mathcal{M}^{\neq}(R)$ such that $\mu'_i P_i z$. If instead there exist $R\in \mathcal{\tilde{R}}$ and $\mu'\in \mathcal{M}^{\neq}(R)$ such that $z P_i \mu'_i$,  we may move $z$ to the top of $i$'s preference order, in a series of pairwise swaps of adjacent objects. Call the new profile $R'$. Since these swaps did not involve $\mu'_i$, pairwise responsiveness implies that $f_{\mu'}(R')\neq R\!P_{\mu'}(R')$. Hence, by ex-post efficiency of $f$ and $R\!P$, $\mu'$ is efficient at $R'$. Yet, there is a Pareto-improving trade where $i$ exchanges $\mu'_i$ for $z$ with $j$ for whom $\mu'_j=z$ and who, as any $j\neq i$, ranks $z$ last -- a contradiction. Thus, for any $R\in \mathcal{\tilde{R}}$ and $\mu'\in \mathcal{M}^{\neq}(R)$ we have $\mu'_i=z$. To narrow down further a subdomain for which $f\neq R\!P$, pick any $R^1\in \mathcal{\tilde{R}}$ for which $\mathcal{M}^{\neq}(R^1)$ is nonempty and define
\[\mathcal{R}^1=\{R\in \mathcal{\tilde{R}}| R_i=R^1_i\}.\]

Regardless of whether we arrived at $\mathcal{R}^1$ via case (i) or (ii), we can then consider $R\in \mathcal{R}^1$ and $\mu\in \mathcal{M}^\neq(R)$, note that $\mu$ is efficient and hence the outcome of some priority rule at $R$, and identify the last agent according to the priority order, who chose an object $y\neq z$ -- call that agent $j$. Analogous to Step $A$ above, we may then move $y$ down to a position immediately above $z$ for most agents -- if Case (i)  applied in the above, we may do so for all agents other than $j$, if Case (ii) applied, we may do so for all agents other than $i$ and $j$. 

Next, analogous to Step $B$ above, we may move down $y$ to just above $z$ also for $j$, as in Case (i), or fix the preferences of $j$, knowing that at all remaining preference profiles where $f$ and $R\!P$ differ, $j$ will be assigned $y$, as in Case (ii) -- in both cases we have arrived at a further restricted preference domain $\mathcal{R}^2$. Define $O_2:=\{z,y\}$ as the set of objects we pushed down in the two iterations, $N_2^{ii}$ the set of agents for whom Case (ii) applied (so $N_2^{ii}\subseteq \{i,j\}$) and define $N_2^{i}:=N\backslash N_2^{ii}$ as the complement. Importantly, for all $k\in N_2^{i}$, and $R\in \mathcal{R}^2$ we know that $oR_kyR_kz$ for all $o\in O\backslash O_2$.

Now, consider $\mathcal{R}^{n-1}$ arrived at by iterating the steps above, together with the associated sets of objects $O_{n-1}$ and agents $N_{n-1}^{ii}$, $N_{n-1}^{i}$. Take any $R\in \mathcal{R}^{n-1}$ for which $\mathcal{M}^{\neq}(R)$ is non-empty. Since the total probabilities over all assignments sum to one,  there exist $\mu,\mu'\in \mathcal{M}^{\neq}(R)$ for which $f_\mu(R)>R\!P_\mu(R)$ and $f_{\mu'}(R)<R\!P_{\mu'}(R)$. Now, by construction, $\mu_k=\mu'_k$ for all $k\in N_{n-1}^{ii}$. Hence, $\mu$ and $\mu'$ differ only in the assignment of agents $k\in N_{n-1}^{i}$. However, all these agents rank the same set of $n-1$ objects at the bottom, and do so in the same order, so that $R_k=R_l$ for all $k,l\in N_{n-1}^{i}$. Thus, equal-treatment-of-equals for $R\!P$ implies $f_\mu(R)>R\!P_\mu(R)=R\!P_{\mu'}(R)>f_{\mu'}(R)$, which violates equal-treatment-of-equals for $f$ -- a contradiction that concludes the sketch of proof for the case of  $q_x=1$ for all $x\in O$ (no copies of objects) and $\sum_{x\in O} q_x=|O|=|N|$.  For the detailed proof of the general case, allowing for an unequal number of objects and agents as well as multiple copies per object, see the appendix. There we also explain how the proof may be further generalized to a domain where all agents are assumed to rank the same (null) object at the bottom

\subsection{Unifying and generalizing previous equivalence results}
\cite{abdulkadirouglu1998random} proved that $R\!P$ is equivalent to the core from random endowments.\footnote{\cite{knuth1996exact} independently proved an analogous result.} Thus, in a setting with an equal number of agents and objects, it arise not only from a (uniform) randomization over $|N|!$ many deterministic priority rules (where agents chose the most preferred among remaining objects according to a deterministic priority order), but equivalently from a randomization over $|N|!$ many TTC-rules that permute agents' roles by permuting their initial endowments. Note that any deterministic priority rule \citep{kojima2010axioms} as well as any TTC-rule \citep{papai2000strategyproof}, satisfies Maskin monotonicity and hence probabilistic monotonicity. Randomizing over $|N|!$ many permutations of each rule, yields a random assignment rule that preserves probabilistic monotonicity as well as ex post efficiency. Moreover, by construction, it also satisfies equal-treatment-of-equals. But then, by Theorem \ref{th.ax}, it must coincide with $R\!P$. 

In the same way, \cite{pathak2011lotteries,lee2011equivalence,carroll2014general,bade2020random} consider progressively larger classes of deterministic rules and show that for each of them, permuting agents' roles and randomizing over all permutations of the rule yields $R\!P$. Since in all these results, the initial deterministic rules under consideration are Maskin monotonic, the results follow from Theorem \ref{th.ax} in the same way as described above. 

To state this formally, let $f$ be a deterministic assignment rule on $\mathcal{R}$ or $\mathcal{\underline{R}}$ and define $\varphi$ on the same domain by permuting agents roles in $f$ and randomizing over all these permutations uniformly, i.e., for all preference profiles $R$ and assignments $\mu$ \[\varphi_\mu(R)=\frac{1}{|N|!}\sum_{\pi\in \Pi}f_{\mu\cdot\pi}(R_{\pi(1)},R_{\pi(2)},... ,R_{\pi(|N|)}),\] where  $\Pi$ denotes the set of all permutations of agents in $N$\footnote{I.e., all bijections $\pi: N\rightarrow N$.} and $\mu\cdot\pi$ denotes the accordingly permuted assignment, i.e., $(\mu\cdot\pi)_i=\mu_{\pi(i)}$ for all $i\in N$.

\begin{corollary}\label{cor.eq}
Consider a deterministic assignment rule $f$ on $\mathcal{R}$ or $\mathcal{\underline{R}}$ and the associated random assignment $\varphi$. If $f$ is Maskin-monotonic (or equivalently strategy-proof and non-bossy) and ex post efficient, then $\varphi$ is the Random Priority Rule $R\!P$.
\end{corollary}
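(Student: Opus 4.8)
The plan is to apply Theorem~\ref{th.ax}, specifically the implication $(1)\Rightarrow(4)$: it suffices to show that the symmetrized rule $\varphi$ satisfies ex post efficiency, equal-treatment-of-equals, and probabilistic monotonicity, whence $\varphi=R\!P$. The whole argument rests on the equivariance of the construction under relabeling of agents, so I would first record the two bookkeeping identities $(\mu\cdot\sigma)\cdot\pi=\mu\cdot(\sigma\circ\pi)$ and $(R\circ\sigma)\circ\pi=R\circ(\sigma\circ\pi)$, writing $R\circ\pi:=(R_{\pi(1)},\dots,R_{\pi(|N|)})$. I would also note that $\varphi(R)$ is a genuine random assignment, since each $f(R\circ\pi)$ is an assignment and relabeling by $\pi$ preserves the capacity constraints, so $\varphi(R)$ is a convex combination of elements of $\mathcal{M}$.

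For ex post efficiency, observe that a positive term $f_{\mu\cdot\pi}(R\circ\pi)$ requires $f(R\circ\pi)=\mu\cdot\pi$, which is efficient at $R\circ\pi$ by ex post efficiency of $f$. Since efficiency is invariant under simultaneously relabeling the profile and the assignment (i.e. $\nu$ is efficient at $S$ iff $\nu\cdot\rho$ is efficient at $S\circ\rho$), applying this with $\rho=\pi^{-1}$ shows $\mu=(\mu\cdot\pi)\cdot\pi^{-1}$ is efficient at $R$; hence every assignment in $\mathrm{supp}(\varphi(R))$ is efficient. For equal-treatment-of-equals, I would first establish the anonymity identity $\varphi_{\mu\cdot\sigma}(R\circ\sigma)=\varphi_\mu(R)$ for every permutation $\sigma$, which follows from the two identities above by re-indexing the defining sum via $\rho=\sigma\circ\pi$. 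Then, if $R_i=R_j$ and $\mu'$ swaps the objects of $i$ and $j$, writing $\tau=(i\,j)$ gives $\mu'=\mu\cdot\tau$ and, crucially, $R\circ\tau=R$ (as $R_i=R_j$); anonymity then yields $\varphi_{\mu'}(R)=\varphi_{\mu\cdot\tau}(R\circ\tau)=\varphi_\mu(R)$.

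The substantive step, and the only one invoking the hypothesis on $f$, is probabilistic monotonicity; I expect this to be the crux. Given a $\mu$-monotonic transformation $R'$ of $R$, so that $U(\mu_i,R'_i)\subseteq U(\mu_i,R_i)$ for all $i$, I would check term-by-term that $R'\circ\pi$ is a $(\mu\cdot\pi)$-monotonic transformation of $R\circ\pi$: the required inclusion $U(\mu_{\pi(i)},R'_{\pi(i)})\subseteq U(\mu_{\pi(i)},R_{\pi(i)})$ is precisely the hypothesis at agent $\pi(i)$. Since $f$ is deterministic and Maskin monotonic, $f(R\circ\pi)=\mu\cdot\pi$ implies $f(R'\circ\pi)=\mu\cdot\pi$, i.e. $f_{\mu\cdot\pi}(R'\circ\pi)\ge f_{\mu\cdot\pi}(R\circ\pi)$ for every $\pi$; averaging over $\pi$ gives $\varphi_\mu(R')\ge\varphi_\mu(R)$.

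Since permuting agents who all rank $\emptyset$ last keeps every agent ranking $\emptyset$ last, each $R\circ\pi$ remains in the relevant domain, so the argument and the appeal to Theorem~\ref{th.ax} are valid on both $\mathcal{R}$ and $\mathcal{\underline{R}}$. The main obstacle is essentially the permutation bookkeeping; conceptually, the content is simply that the equivariance of the symmetrization transports ex post efficiency and Maskin monotonicity of the deterministic rule into their probabilistic analogues, while the averaging over all permutations manufactures equal-treatment-of-equals.
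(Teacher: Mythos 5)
Your proposal is correct and follows exactly the route the paper takes: show that the symmetrization $\varphi$ inherits ex post efficiency and probabilistic monotonicity from the efficiency and Maskin monotonicity of $f$, gains equal-treatment-of-equals by construction, and then invoke Theorem~\ref{th.ax}. The only difference is that you spell out the permutation bookkeeping (the equivariance identities and the anonymity identity $\varphi_{\mu\cdot\sigma}(R\circ\sigma)=\varphi_\mu(R)$) that the paper leaves implicit, and all of those details check out.
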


Note that Theorem \ref{th.ax}, and hence Corollary \ref{cor.eq}, allow for multiple copies of objects, thus generalizing the existing equivalence results.\footnote{\cite{abdulkadirouglu1998random,pathak2011lotteries,lee2011equivalence,carroll2014general} consider an equal number of agents and objects and no copies. \cite{bade2020random} allows for an unequal number of agents and objects as well as a null-object in abundant supply with the restriction that the null-object is always ranked last -- this corresponds to the domain $\mathcal{\underline{R}}$ under the restriction that $\emptyset$ is the only object with multiple copies.}   Moroever, Theorem \ref{th.ax} improves on these earlier results also in that it dispenses with the restriction to assignment rules that arise from randomization over deterministic rules and instead  considers  random assignment rules directly. For example, \cite{pycia2015decomposing} show that the set of strategy-proof random assignment rules is strictly larger than the set or randomizations over deterministic strategy-proof assignment rules, so that, in general, restricting to randomizations over deterministic rules is not innocuous when trying to identify random assignments satisfying appealing properties.

\subsection{Expected utility and preference intensities}
Instead of assuming that agents' preferences are given by linear orders over the set of objects $O$, we may also assume that they are described by von Neumann–Morgenstern utility functions, i.e., that there exist $u_i: O\rightarrow \mathbb{R}$ for each $i\in N$. Depending on the interpretation attached, we may thus capture preference intensities or complete agents' preferences  over lotteries of objects by assuming that they compare different lotteries by their corresponding expected utility. Restricting attention to strict preferences over objects, we require  that  $u_i(x)\neq u_i(y)$ for all $i\in N$, $x,y\in O$, and $x\neq y$. Let $\mathcal{U}^i$ denote the set of all such utility functions of agent $i$ and let $\mathcal{U}=\times_{i\in N} \mathcal{U}^i$ denote the set of all utility profiles $u=(u_1,...,u_n)$. The associated strict preference relations and preference profiles are denoted $R^u_i$ and $R^u=(R^u_1,...,R^u_n)$.\footnote{I.e. for all $x,y\in O$ we have $xR^u_i y :\Leftrightarrow u_i(x)\geq u_i(y)$. }

A random assignment rule on the domain $\mathcal{U}$ then maps utility profiles to random assignments, i.e., $f: \mathcal{U}\rightarrow \Delta(\mathcal{M})$. It is ex post efficient if $f(u)$ is ex post efficient for all $u\in \mathcal{U}$ and satisfies probabilistic monotonicity, if probabilistic monotonicity is satisfied for the associated strict preferences $R^u$.\footnote{I.e., $f$ satisfies probabilistic monotonicity if for any $u,u'\in \mathcal{U}$ and $\mu\in \mathcal{M}$ such that $R^{u'}$is a $\mu$-monotonic transformation of  $R^u$, we have $f_{\mu}(u')\geq f_{\mu}(u)$.}  

A random assignment $p$ satisfies equal-treatment-of-equals with respect to $u\in \mathcal{U}$ if for any two agents $i,j\in N$ such that $u_i=u_j$ and any two deterministic assignments $\mu,\mu'$ such that $\mu_i=\mu'_j$, $\mu_j=\mu'_i$ and $\mu_k=\mu_k'$ for all $k\in N\backslash\{i,j\}$ we have $p_{\mu}=p_{\mu'}$. In other words, deterministic assignments that merely permute the roles of agents with identical utility functions arise with the same probability. Note that this allows for unequal treatment of agents with identical ordinal preferences over objects, $R_i^u=R_j^u$, and thus weakens equal-treatment-of-equals as defined with respect to $R$. A random assignment rule satisfies equal-treatment-of-equals on $\mathcal{U}$ if $f(u)$  satisfies equal-treatment-of-equals for any $u\in \mathcal{U}$.

Despite the fact that equal-treatment-of-equals with respect to $u$ is weaker than equal-treatment-of-equals with respect to $R^u$, it gives rise to a characterization of the Random Priority rule $R\!P$ on the domain of (profiles of) von Neumann–Morgenstern utility functions in analogy to Theorem \ref{th.ax}.

\begin{proposition}\label{prop.ax}
Let $f$ be a random assignment rule on $\mathcal{U}$. Then $f$ satisfies equal-treatment-of-equals on $\mathcal{U}$, ex post efficiency, and probabilistic monotonicity if and only if $f$ is the Random Priority Rule.
\end{proposition}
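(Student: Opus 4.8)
The plan is to reduce Proposition~\ref{prop.ax} to Theorem~\ref{th.ax} by exploiting the fact that the cardinal axioms depend on utility profiles only through the induced ordinal preferences, \emph{except} for equal-treatment-of-equals, which is genuinely weaker on $\mathcal{U}$. First I would observe that ex post efficiency and probabilistic monotonicity on $\mathcal{U}$ are, by definition (see the footnotes in the excerpt), properties of the associated ordinal profile $R^u$ alone: if $u,u'\in\mathcal{U}$ induce the same ordinal profile, then $f(u)$ and $f(u')$ face identical efficiency constraints and identical $\mu$-monotonic transformations. The only axiom that distinguishes $u$ from $R^u$ is equal-treatment-of-equals, which on $\mathcal{U}$ only constrains agents with \emph{identical utility functions}, not merely identical ordinal rankings. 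So the heart of the argument is to show that, given ex post efficiency and probabilistic monotonicity, the weaker cardinal equal-treatment-of-equals already forces the stronger ordinal one, whereupon Theorem~\ref{th.ax} applies verbatim.

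Concretely, I would fix a utility profile $u\in\mathcal{U}$ and two agents $i,j$ with $R^u_i=R^u_j$ but $u_i\neq u_j$, and aim to prove $f_\mu(u)=f_{\mu'}(u)$ for assignments $\mu,\mu'$ that merely swap the objects of $i$ and $j$. The idea is a path-connectedness / invariance argument: since probabilistic monotonicity and ex post efficiency are ordinal, $f_\mu(u)$ is invariant as we vary $u$ within the set of utility profiles inducing a fixed ordinal profile $R$, \emph{as long as} $\mu$ does not move down in anyone's ranking (and symmetrically is invariant when it does not move up). More precisely, if $u'$ induces the same ordinal $R$ as $u$, then $R^{u'}$ is trivially a $\mu$-monotonic transformation of $R^u$ and vice versa (upper contour sets coincide), so probabilistic monotonicity applied in both directions gives $f_\mu(u')=f_\mu(u)$. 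Hence $f$ restricted to a single ordinal indifference class is \emph{constant} in the cardinal representation. In particular I may freely deform $u_i$ and $u_j$ toward a common utility function $\bar u$ with $R^{\bar u}_i=R^{\bar u}_j=R^u_i$ without changing any $f_\mu$; at the profile where $u_i=u_j=\bar u$ the cardinal equal-treatment-of-equals applies and yields $f_\mu=f_{\mu'}$, and by the established invariance this equality transports back to the original $u$.

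With ordinal equal-treatment-of-equals thus recovered, the remaining step is immediate: a random assignment rule on $\mathcal{U}$ satisfying ex post efficiency, probabilistic monotonicity, and (now) ordinal equal-treatment-of-equals descends to a well-defined rule on $\mathcal{R}$, since all three properties are ordinal and the value $f(u)$ depends only on $R^u$. Theorem~\ref{th.ax} then identifies this induced rule as $R\!P$, and since $R\!P$ on $\mathcal{U}$ is by construction the ordinal rule, $f=R\!P$. The converse direction is routine: $R\!P$ is ordinal, manifestly ex post efficient and probabilistically monotone (as noted in the model section), and satisfies the ordinal and hence a fortiori the cardinal equal-treatment-of-equals.

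The step I expect to be the main obstacle is establishing the \emph{constancy of $f$ on an ordinal indifference class} rigorously, because one must verify that deforming $u_i$ and $u_j$ to a common $\bar u$ can be done entirely within $\mathcal{U}$ (keeping all utilities of distinct objects distinct and preserving every agent's ordinal ranking throughout), and that probabilistic monotonicity is genuinely applicable in both directions to conclude equality rather than merely inequality. The subtlety is that equal-treatment-of-equals on $\mathcal{U}$ only bites at the single point $u_i=u_j=\bar u$, so the whole weight of the reduction rests on this invariance lemma; once it is in place the descent to Theorem~\ref{th.ax} carries no further difficulty.
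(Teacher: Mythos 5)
Your proposal is correct and follows essentially the same route as the paper: derive ordinality of $f$ from probabilistic monotonicity applied in both directions, use it to upgrade equal-treatment-of-equals on $\mathcal{U}$ to equal-treatment-of-equals with respect to $R^u$, and then invoke Theorem~\ref{th.ax}. The ``main obstacle'' you flag is in fact no obstacle: ordinality lets you replace $u_j$ by $u_i$ in a single step (no continuous deformation or path within $\mathcal{U}$ is needed), since any two utility profiles inducing the same ordinal profile already yield the same value of $f$.
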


\begin{proof} Crucially, observe that probabilistic monotonicity implies ordinality of $f$, i.e., that for any $\mu\in \mathcal{M}$ and any two $u,u'\in \mathcal{U}$ such that $R^u=R^{u'}$, we have $f_\mu(u)=f_\mu(u')$: since $R^u$ is a $\mu$-monotonic transformation of $R^{u'}$, probabilistic monotonicity demands $f_\mu(u)\geq f_\mu(u')$ and, by a symmetric argument, $f_\mu(u)\leq f_\mu(u')$. 

Hence $f$ takes into account only agents' ordinal preferences $R^u$ rather than the richer information encoded in their von Neumann–Morgenstern utility functions $u$. Thus, to satisfy equal-treatment-of-equals with respect to any $u\in \mathcal{U}$, $f$ has to satisfy equal-treatment-of-equals with respect to the associated preference profiles $R^u$. The claim then follows from Theorem \ref{th.ax}.
\end{proof}

\subsection{Probabilistic monotonicity}\label{sec.pmm}
Given the crucial rule of probabilistic monotonicity in the characterizations above, we complement Theorem \ref{th.ax} and Proposition \ref{prop.ax}  with some observations on the nature of the axiom and its relation to strategy-proofness. For that, observe that
strategy-proofness can be decomposed into the following three axioms \citep{mennle2021partial}: A random assignment rule $f$ satisfies
\begin{itemize}
\item \emph{swap monotonicity} iff for all $i\in N$, $R\in \mathcal{R}$ and $R_i'\in N_{R_i}$ such that $xP_iy$ and $yP_i'x$, we have (i) $f^{iy}(R_i',R_{-i})\geq f^{iy}(R)$ and (ii) $f^{iy}(R_i',R_{-i})= f^{iy}(R)\Rightarrow f^i(R_i',R_{-i})=f^i(R)$. 
\item \emph{upper invariance} iff for all $i\in N$, $R\in \mathcal{R}$ and $R_i'\in N_{R_i}$ such that $xP_iy$ and $yP_i'x$, we have $f^{iz}(R_i',R_{-i})=f^{iz}(R)$ for all $z\in U(x,R_i)$. 
\item \emph{lower invariance} iff for all $i\in N$, $R\in \mathcal{R}$ and $R_i'\in N_{R_i}$ such that $xP_iy$ and $yP_i'x$, we have $f^{iz}(R_i',R_{-i})=f^{iz}(R)$ for all $z\in L(y,R_i)$.
\item \emph{strategy proofness} iff it satisfies swap monotonicity, upper-, and lower invariance \cite[Theorem 1]{mennle2021partial}.
\end{itemize}

Swap monotonicity demands that if an agent swaps to adjacent objects in their preference order, then they should receive the object that is moved up in their ranking with weakly higher probability. In other words, the \emph{sum} over the probabilities of all assignments where they receive the object should be weakly larger. More demanding, probabilistic monotonicity demands that the probability of \emph{each} assignment where they receive the object that has moved up should increase -- instead of comparing the sum of probabilities associated with certain assignments, we now compare each summand. Strengthening the other two axioms analogously, we arrive at the following: 
A random assignment rule $f$ satisfies
\begin{itemize}
\item \emph{assignment swap monotonicity} iff for all $i\in N$, $R\in \mathcal{R}$ and $R_i'\in N_{R_i}$ such that $xP_iy$ and $yP_i'x$, we have (i) $f_{\mu}(R_i',R_{-i})\geq f_{\mu}(R)$ for all $\mu$ with $\mu_i=y$, and (ii) $f_{\mu}(R_i',R_{-i})= f_{\mu}(R)$ for all $\mu$ with $\mu_i=y$ implies $f(R_i',R_{-i})=f(R)$.
\item \emph{upper assignment invariance} iff for all $i\in N$, $R\in \mathcal{R}$ and $R_i'\in N_{R_i}$ such that $xP_iy$ and $yP_i'x$, we have $f_{\mu}(R_i',R_{-i})=f_{\mu}(R)$ for all $\mu$ where $\mu_i\in U(x,R_i)$. 
\item \emph{lower assignment invariance} iff for all $i\in N$, $R\in \mathcal{R}$ and $R_i'\in N_{R_i}$ such that $xP_iy$ and $yP_i'x$, we have $f_{\mu}(R_i',R_{-i})=f_{\mu}(R)$ for all $\mu$ where $\mu_i\in L(y,R_i)$.
\end{itemize}

Note that pairwise responsiveness is equivalent to the conjunction of upper and lower assignment invariance. Keeping the two separate, however, allows us to formulate the next proposition in close analogy to \cite[Theorem~1]{mennle2021partial}.

\begin{proposition}\label{prop.decomp} A random assignment rule satisfies probabilistic monotonicity if and only if it satisfies assignment swap monotonicity, upper assignment invariance, and lower assignment invariance. \end{proposition}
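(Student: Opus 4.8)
The plan is to prove the two implications separately, in each case reducing everything to a single adjacent swap and exploiting the fact that such a swap of $x,y$ (with $xP_iy$, $yP_i'x$) partitions $O$ into the four blocks $U(x,R_i)$, $\{x\}$, $\{y\}$, and $L(y,R_i)$. For the direction \emph{probabilistic monotonicity implies the three axioms}, I would fix $i$, $R$, and $R_i'\in N_{R_i}$ swapping adjacent $x,y$, and write $R^s=(R_i',R_{-i})$. If $\mu_i\in U(x,R_i)$ or $\mu_i\in L(y,R_i)$ the swap leaves the set $U(\mu_i,\cdot)$ unchanged, so $R^s$ is a $\mu$-monotonic transformation of $R$ and conversely; applying probabilistic monotonicity in both directions gives $f_\mu(R^s)=f_\mu(R)$, which is upper, resp.\ lower, assignment invariance. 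If $\mu_i=y$ then $\mu_i$ moves strictly up, so $R^s$ is a $\mu$-monotonic transformation of $R$ and probabilistic monotonicity gives $f_\mu(R^s)\ge f_\mu(R)$, i.e.\ part~(i) of assignment swap monotonicity. For part~(ii), assume $f_\mu(R^s)=f_\mu(R)$ for all $\mu$ with $\mu_i=y$; together with the invariances just established, equality holds for every $\mu$ with $\mu_i\ne x$, so since total probability is one under both profiles the masses on $\{\mu:\mu_i=x\}$ coincide. For $\mu_i=x$ the reverse swap moves $x$ up, so $R$ is a $\mu$-monotonic transformation of $R^s$ and probabilistic monotonicity gives $f_\mu(R)\ge f_\mu(R^s)$ termwise; non-negative terms summing to zero must vanish, yielding $f(R)=f(R^s)$.

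For the converse, \emph{the three axioms imply probabilistic monotonicity}, I would first write an arbitrary $\mu$-monotonic transformation $R\to R'$ as a chain that changes one agent's preference at a time; each link is again $\mu$-monotonic, so it suffices to treat $R_{-i}=R_{-i}'$ with $U(\mu_i,R_i')\subseteq U(\mu_i,R_i)$. The plan is to realise this single-agent change by a sequence of adjacent swaps along which $\mu_i$ never descends, in three stages: (1) using only swaps of pairs both lying above $\mu_i$ -- legitimate by lower assignment invariance, as $\mu_i$ lies below both -- move the objects of $U(\mu_i,R_i)\setminus U(\mu_i,R_i')$ so that they sit immediately above $\mu_i$; (2) raise $\mu_i$ past exactly these objects one swap at a time, each invoking part~(i) of assignment swap monotonicity and so weakly increasing $f_\mu$; (3) reorder the blocks now strictly above and strictly below $\mu_i$ into the order prescribed by $R_i'$, which leaves $f_\mu$ fixed by lower and upper assignment invariance respectively. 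Composing the stages gives $f_\mu(R_i',R_{-i})\ge f_\mu(R_i,R_{-i})$, and chaining over agents yields $f_\mu(R')\ge f_\mu(R)$.

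I expect the main obstacle to be the bookkeeping in this converse direction: one must verify that after stage~(2) the set of objects lying strictly above $\mu_i$ is \emph{exactly} $U(\mu_i,R_i')$, so that every swap in stages~(1) and~(3) genuinely stays on one side of $\mu_i$ and the correct invariance axiom applies -- this is precisely where the hypothesis $U(\mu_i,R_i')\subseteq U(\mu_i,R_i)$ enters, since it guarantees that only objects from above $\mu_i$ ever need to cross below it, never the reverse. The remaining verifications are routine and I would relegate them to the appendix: that adjacent transpositions confined to a contiguous block suffice to realise any reordering of that block, and that the four-block partition of $O$ is genuine so that assignment swap monotonicity part~(i) and the two invariances together cover every adjacent swap exactly once.
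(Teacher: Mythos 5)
Your proposal is correct and follows essentially the same route as the paper: both directions reduce to a single adjacent swap via a chain of profiles along which $\mu_i$ never descends, and then classify each swap by the four-block partition $U(x,R_i),\{x\},\{y\},L(y,R_i)$. The only differences are presentational -- the paper argues the ``if'' direction by contraposition (locating the first swap at which $f_\mu$ drops) and merely asserts the existence of the swap sequence, whereas you argue directly and spell out its three-stage construction.
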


\begin{proof} 
First observe that any violation of upper- or lower assignment invariance also constitutes a violation of probabilistic monotonicity. The same holds if assignment swap monotonicity is violated in that for some $i\in N$, $R\in \mathcal{R}$, $R_i'\in N_{R_i}$ such that $xP_iy$ and $yP_i'x$, and  $\mu$ with $\mu_i=y$ we have $f_{\mu}(R_i',R_{-i})< f_{\mu}(R)$. If instead $f_{\mu}(R_i',R_{-i})= f_{\mu}(R)$ for all $\mu$ with $\mu_i=y$, yet $f(R_i',R_{-i})\neq f(R)$, there is some $\mu'$ with $\mu'_i\neq y$ such that $f_{\mu'}(R_i',R_{-i})> f_{\mu'}(R)$ -- again, this constitutes a violation of probabilistic monotonicity. Hence, to satisfy probabilistic monotonicity, a random assignment rule must satisfy all three properties.

For the other direction, consider a violation of probabilistic monotonicity, i.e., two preference profiles $R,R' \in \mathcal{R}$ and an assignment $\mu \in \mathcal{M}$ such that $R'$ is a $\mu$-monotonic transformation of $R$\footnote{I.e., $U(\mu_i, R'_i)\subseteq  U(\mu_i, R_i)$ for all $i\in N$.} for which we have $f_{\mu}(R') < f_{\mu}(R)$.  Note that we can move from $R$ to $R'$ in a sequence of profiles $R=R^1, R^2, R^3..., R^m=R'$ where at each step two consecutive profiles differ only by a pairwise swap of two adjacent objects in one agents' preferences and where the successor profile is a $\mu$-monotonic transformation of its predecessor. Thus, at some point of the sequence, there are $R^k$ and $R^{k+1}$ such that $f_{\mu}(R^{k+1}) < f_{\mu}(R^{k})$. Let $i$ be the agent for whom $R_i^k$ and $R_i^{k+1}$ differ and denote the two objects adjacent in $R_i^k$ that are swapped as we move to $R_i^{k+1}$ as $x$ and $y$; without loss of generality assume that $x R_i^k y$ and $yR_i^{k+1} x$. 
Since $R^{k+1}$ is a $\mu$-monotonic transformation of $R^k$, we have $\mu_i\neq x$.

Now, if $\mu_i\in U(x, R_i^k)$, the fact that $f_{\mu}(R^{k+1}) < f_{\mu}(R^{k})$ implies that $f$ violates upper assignment invariance. Similarly, if $\mu_i\in L(y, R_i^k)$, we find that $f$ violates lower assignment invariance while if $\mu_i=y$, $f$ violates assignment swap monotonicity. 
Hence, any rule that violates probabilistic monotonicity must violate one of the three properties.
\end{proof}

\begin{corollary}\label{cor} Probabilistic monotonicity implies strategy-proofness. The converse does not hold: there exist strategy-proof random assignment rules for which the rule itself, as well as any other random assignment rule that yields the same individual object assignment probabilities, fails to satisfy probabilistic monotonicity\end{corollary}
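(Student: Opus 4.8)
The plan is to treat the two assertions separately: the implication follows from the two decomposition results, and the converse is obtained by extracting a marginal-level consequence of probabilistic monotonicity and exhibiting a strategy-proof rule that violates it.

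For the implication I would show that each of the three axioms characterizing probabilistic monotonicity in Proposition \ref{prop.decomp} implies its object-level counterpart in the decomposition of strategy-proofness due to \cite{mennle2021partial}. Fix $i\in N$, a profile $R$, and $R_i'\in N_{R_i}$ with $xP_iy$ and $yP_i'x$. Summing upper assignment invariance over all $\mu$ with $\mu_i=z$, for a fixed $z\in U(x,R_i)$, yields $f^{iz}(R_i',R_{-i})=f^{iz}(R)$, i.e.\ upper invariance; summing lower assignment invariance over $z\in L(y,R_i)$ gives lower invariance; and summing part (i) of assignment swap monotonicity over all $\mu$ with $\mu_i=y$ gives part (i) of swap monotonicity. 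The only nonroutine point is part (ii) of swap monotonicity: assuming $f^{iy}(R_i',R_{-i})=f^{iy}(R)$, part (i) of assignment swap monotonicity makes each summand $f_\mu(R_i',R_{-i})\geq f_\mu(R)$ (over $\mu$ with $\mu_i=y$) while the sums agree, so every summand is equal; part (ii) then forces $f(R_i',R_{-i})=f(R)$, whence $f^i(R_i',R_{-i})=f^i(R)$. Since strategy-proofness is equivalent to swap monotonicity together with upper and lower invariance \cite[Theorem 1]{mennle2021partial}, this establishes the first claim.

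For the converse I would first isolate a consequence of probabilistic monotonicity that constrains only the individual object assignment probabilities. Running the previous argument one step further: if $i$ swaps adjacent $x,y$ with $y$ promoted and $f^{iy}$ is left unchanged, then $f(R_i',R_{-i})=f(R)$, and hence $f^{jw}(R_i',R_{-i})=f^{jw}(R)$ for \emph{every} agent $j$ and object $w$. Thus any probabilistically monotonic rule is non-bossy at the level of marginals: a swap leaving the swapping agent's probability of the promoted object unchanged cannot alter any agent's object assignment probabilities. As this condition refers only to the numbers $f^{iz}$, any two rules with identical individual object assignment probabilities satisfy or violate it together. It then suffices to display a strategy-proof rule violating it. I would take $N=\{1,2\}$, $O=\{a,b,c\}$ with unit supplies, let agent $1$ always receive her top-ranked object (a dictator, trivially strategy-proof and with outcome unaffected by any swap below her top), and allocate one of the two remaining objects to agent $2$ by giving her the better of the two, in her report, with probability $\lambda(R_1)$ and the other with probability $1-\lambda(R_1)$, where $\lambda(R_1)\in\{\tfrac12,1\}$ depends on the order in which $R_1$ ranks the two objects other than her top choice. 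Each sub-rule is strategy-proof for agent $2$ because $\lambda\geq\tfrac12$, the rule is strategy-proof for agent $1$ since she is a dictator, and the joint distribution is well defined since agent $2$ only receives an object that agent $1$ does not. Taking $R_1:a\succ b\succ c$, $R_2:b\succ c\succ a$ and swapping $b,c$ in $R_1$ (promoting $c$) leaves agent $1$'s outcome and hence $f^{1c}=0$ unchanged, yet moves $f^{2b}$ from $1$ to $\tfrac12$; this violates marginal non-bossiness, so $f$ and every rule with the same marginals fail probabilistic monotonicity.

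The main obstacle is the converse: one must identify precisely the marginal-level strengthening of non-bossiness that probabilistic monotonicity carries beyond strategy-proofness, and then produce a rule that is genuinely strategy-proof for both agents while being bossy in exactly that way. The implication direction is routine bookkeeping once Proposition \ref{prop.decomp} is in hand.
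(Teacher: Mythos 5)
Your argument is correct, but the two halves fare differently against the paper. For the implication, you do exactly what the paper does: invoke Proposition \ref{prop.decomp} and \cite[Theorem~1]{mennle2021partial} and observe that each assignment-level axiom implies its marginal counterpart by summing over $\{\mu:\mu_i=z\}$; your handling of part (ii) of swap monotonicity (nonnegative summands with equal sums are termwise equal, then assignment swap monotonicity (ii) forces $f(R_i',R_{-i})=f(R)$) is the right way to make the paper's one-line remark precise. For the converse you take a genuinely different route. The paper argues non-constructively: by Theorem \ref{th.ax}, any rule satisfying equal-treatment-of-equals, ex post efficiency and probabilistic monotonicity is $R\!P$, and the rules of \cite{erdil2014strategy} and \cite{basteck2024constrained} satisfy those first two axioms plus strategy-proofness while differing from $R\!P$ in their marginals, so they and their welfare-equivalent variants cannot be probabilistically monotonic. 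You instead build an explicit two-agent, three-object dictator-plus-lottery rule and verify a violation directly. Your example is sound: at $R$ the marginals $f^{1a}=f^{2b}=1$ pin the lottery down to $\delta_{(a,b)}$, the swap of $b,c$ in $R_1$ is a $\mu$-monotonic transformation for $\mu=(a,b)$, and any rule with $g^{2b}(R')=\tfrac12$ has $g_\mu(R')\le\tfrac12<1$, so the welfare-equivalence clause is handled transparently (indeed your ``marginal non-bossiness'' detour is not even needed; the violation can be read off directly). What each approach buys: yours is self-contained, elementary, and makes the ``same marginals'' clause airtight without relying on external constructions; the paper's delivers a strictly stronger conclusion, namely that the gap between strategy-proofness and probabilistic monotonicity persists even among rules satisfying equal-treatment-of-equals and ex post efficiency, neither of which your example satisfies (at $R'$ the outcome $(a,c)$ is Pareto-dominated, and the rule treats two agents with identical preferences asymmetrically). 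Both establish the corollary as stated.
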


The first part is an immediate consequence of Proposition \ref{prop.decomp} above and Theorem~1 in  \citep{mennle2021partial}, that decomposes strategy-proofness into (i) swap monotonicity, (ii) upper invariance, and (iii) lower invariance, each of which is weaker than their counterpart referring to assignments. The fact that probabilistic monotonicity is in fact stronger than strategy-proofness (rather than equivalent to) follows from Theorem~\ref{th.ax} above as well as Proposition~3 in \citep{erdil2014strategy} and Theorem 1 in \citep{basteck2024constrained} which show $R\!P$ \emph{not} to be characterized by equal-treatment-of-equals, ex post efficiency, and  strategy-proofness. While the construction by \cite{erdil2014strategy} relies on the possibility that agents may remain unassigned (or, equivalently, assumes that there is a (null-)object with at least $|N|$ copies), \cite{basteck2024constrained} show that $R\!P$ is not characterized by equal-treatment-of-equals ex post efficiency, and strategy-proofness even in the absence of such outside options.

\medskip
If we do not confine ourself to the domain of random assignment problems -- where agents have strict preferences over their own assigned object but are otherwise indifferent regarding the assignment -- but instead consider general social choice problems on the universal domain of strict preferences, probabilistic monotonicity and strategy proofness are equivalent. Formally, let $N$ be the finite set of agents, $O$ be the finite set of outcomes, let $R_i$ denote agent $i$'s strict preferences over all outcomes, and denote a (strict) preference profile as $R=(R_i)_{i\in N}$. Let $\mathcal{R}$ denote the domain of all such preference profiles. A random social choice rule\footnote{\cite{gibbard1977manipulation} refers to these as \emph{decision schemes}.} is a mapping $f:\mathcal{R}\rightarrow \Delta(O)$ where $\Delta(O)$ denotes the set of probability distributions over $O$; it is strategy-proof if for all $R_i, R_i'$ and $R_{-i}$ we find that $f(R_i,R_{-i})$ stochastically $R_i$-dominates $f(R_i',R_{-i})$.\footnote{I.e., if $\sum_{y\in U(x,R_i)} p_{iy} \geq \sum_{y\in U(x,R_i)} q_{iy}$ where $U(x,R_i)=\{y\in O\backslash \{x\}|yR_i x\}$ denotes the strict upper contour set at $x$.} Equivalently, \cite{gibbard1977manipulation} defines strategy-proofness such that, for any underlying von Neumann–Morgenstern utility function compatible with $R_i$, reporting $R_i$ truthfully maximizes an agents expected utility. A random social choice rule satisfies probabilistic monotonicity if for any two $R,R'$ and $o\in O$, such that $R'$ is an $o$-monotonic transformation of $R$ we find that $f(R')$ chooses $o$ with weakly higher probability than $f(R)$.

\begin{fact}
A random social choice rule $f$, defined on the universal domain of strict preferences, is strategy-proof if and only if it satisfies probabilistic monotonicity.
\end{fact}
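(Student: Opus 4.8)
The plan is to prove both directions by leveraging the classical equivalence between strategy-proofness and probabilistic monotonicity on the universal domain. The key simplification relative to the assignment setting is that here agents have strict preferences over \emph{all} outcomes in $O$, so there is a single outcome $o$ whose upper contour set we track, rather than a whole vector of object-assignments. This means the ``assignment'' version of the decomposition axioms and the ordinal version coincide: when $f$ maps into $\Delta(O)$ directly, an outcome $o$ plays exactly the role that a single agent's object-probability $f^{io}$ played before, and there is no distinction between ``the probability of an assignment where $i$ gets $o$'' and ``the probability that outcome $o$ is chosen.'' I would first make this observation explicit, as it is what collapses the strict inequality (probabilistic monotonicity $\Rightarrow$ strategy-proofness, with no converse) into an equivalence.

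For the direction that probabilistic monotonicity implies strategy-proofness, I would simply invoke Corollary \ref{cor} (or rather its underlying argument via Proposition \ref{prop.decomp} together with Theorem~1 of \cite{mennle2021partial}), adapted to the social-choice setting: probabilistic monotonicity is stronger than each of swap monotonicity, upper invariance, and lower invariance, whose conjunction is strategy-proofness. This half carries over verbatim and requires no new work.

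For the converse, strategy-proofness $\Rightarrow$ probabilistic monotonicity, I would argue as follows. Take $R, R'$ and $o\in O$ with $R'$ an $o$-monotonic transformation of $R$, meaning $U(o,R_i')\subseteq U(o,R_i)$ for every $i$. Decompose the move from $R$ to $R'$ into a sequence of adjacent pairwise swaps, each of which is itself an $o$-monotonic transformation of its predecessor (so that $o$ never moves down in anyone's ranking along the path); it suffices to handle a single such swap, say agent $i$ swapping adjacent outcomes $x,y$ with $xP_iy$ becoming $yP_i'x$, where $o$-monotonicity forces $o\neq x$ (the swapped-down outcome cannot be $o$). Now the crucial point, which fails in the assignment setting but holds here, is that strategy-proofness on the universal domain pins down the probability of every outcome under a single swap, not just the swapped pair: upper invariance keeps $f(o)$ fixed whenever $o\in U(x,R_i)$, lower invariance keeps it fixed whenever $o\in L(y,R_i)$, and swap monotonicity gives $f(y)$ weakly increasing when $o=y$. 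Since $o\neq x$, these three cases exhaust the possibilities, and in each the probability of $o$ is weakly higher (or equal) under $R_i'$ than under $R_i$. Chaining the inequality along the sequence of swaps gives $f(R')$ chooses $o$ with weakly higher probability than $f(R)$, which is precisely probabilistic monotonicity.

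The step I expect to require the most care is verifying that when $o=y$, swap monotonicity (which for a single swap guarantees $f^{iy}(R_i',R_{-i})\ge f^{iy}(R_i,R_{-i})$) is exactly what probabilistic monotonicity demands for the swapped-up outcome, and that the $o\neq x$ constraint supplied by $o$-monotonicity really does rule out the only case (the swapped-down outcome) where strategy-proofness would impose no constraint in the right direction. In other words, the main obstacle is checking that the three Mennle--Seuken components, taken together and under the constraint $o\neq x$, leave no gap -- but because preferences here are strict over all of $O$, every outcome $o$ falls into exactly one of the three regimes, and the argument closes cleanly without the non-bossiness-type supplement that was needed in the assignment problem.
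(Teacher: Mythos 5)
Your proof is correct and follows essentially the same route as the paper: the paper simply cites Lemma~2 of Gibbard (1977), identifying non-perverseness with swap monotonicity and localizedness with upper and lower invariance, and your argument is precisely the ``closer inspection'' that citation alludes to -- decomposing an $o$-monotonic transformation into adjacent swaps with $o\neq x$ and checking that the three components cover every case because an outcome here is atomic, so the individual and assignment-level versions of the axioms coincide. No gaps; your version is just the self-contained spelling-out of what the paper delegates to Gibbard.
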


This follows, upon closer inspection, from Lemma 2 in \cite{gibbard1977manipulation} (where `non-perverseness' corresponds to swap monotonicity and `localizedness'
 can be shown to be equivalent to  lower- and upper- invariance).\footnote{For deterministic, singleton valued social choice rules on the domain of strict preferences, \cite{muller1977equivalence} show strategy-proofness to be equivalent to `strong positive association', i.e., Maskin monotonicity.} Hence, in the classic characterization of Random Dictatorship as the only random social choice rule on the domain of strict preferences satisfying ex post efficiency and strategy-proofness,\footnote{\cite{gibbard1977manipulation} reports this result as Corollary 1 and credits Hugo Sonnenschein. It is straightforward to see that once we assume anonymity, i.e., require agents to be treated symmetrically, the randomization needs to be uniform.} we may replace the last axiom by the equivalent requirement of probabilistic monotonicity. 
Interestingly, as we leave the universal domain of strict preferences and move to the domain of random assignment problems where agents are indifferent between assignments as long as they award them the same object, probabilistic monotonicity and strategy-proofness diverge -- and it turns out that only probabilistic monotonicity, rather than strategy-proofness, yields a characterization of the Random Priority Rule (also known as Random \emph{Serial} Dictatorship) in analogy to Gibbard's and Sonnenschein's characterization of Random Dictatorship.

\medskip
Finally, let us compare probabilistic monotonicity to obvious-strategy-proofness (OSP) \citep{li2017obviously}. Both strengthen strategy-proofness, with the caveat that OSP applies not to random assignment rules per se, but to possible extensive form mechanisms implementing them.\footnote{More precisely, a random assignment rule is OSP-implementable, if there exists an extensive form mechanism that implements the rule in obviously dominant strategies. Since any obviously dominant strategy is, a fortiori, a dominant strategy, any OSP-implementable rule is strategy-proof.}  We find that both are logically independent: there exist OSP-implementable random assignment rules that violate probabilistic monotonicity and vice versa.

\begin{exmp}
An OSP-implementable 
(random) assignment rule 
violating probabilistic monotonicity. Consider $N=\{1,2,3\}$, $O=\{a,b,c\}$ and a rule $f$ that (i) awards $1$ their most preferred object among $O$ according to $R_1$, (ii) if $1$'s second and third
 most preferred objects 
 are ranked alphabetically, i.e., $aR_1b$, $aR_1c$, or $bR_1c$, then $2$ receives their 
most preferred among the 
remaining objects\footnote{The example does not rely on randomization, but instead constructs a deterministic rule. It can be readily generalized to non-degenerate random assignment rules, e.g., by making the probability with which $2$ gets to choose dependent on $R_1$.}  
while otherwise (iii) $3$ receives their most preferred of the 
remaining objects before (iv) the last 
remaining agent is assigned the last remaining object. This rule can be readily implemented via sequential barter 
\citep{bade2016gibbard} and is hence OSP-implementable (as well as efficient). To see that it violates probabilistic monotonicity, consider the preference profile $R$ 
such that 
$aR_ibR_ic$ 
for all $i\in N$. 
Then $f$ returns 
the assignment 
$\mu=(\mu_1,\mu_2,\mu_3)=(a,b,c)$ 
(with probability $1$). 
If instead we have $R'$ such that $aR'_1cR'_1b$, while as 
before $a R'_i b R'_i c$ for all $i\in \{2,3\}$, then $f$ returns $\mu'=(a,c,b)$ -- despite the fact that $R'$ constitutes a $\mu$-monotonic transformation of $R$.
\end{exmp}

For the other direction, \cite{li2017obviously} shows that the core in Shapley-Scarf housing markets \citep{shapley1974cores}, while implementable in dominant strategies by a  top-trading-cycle mechanism, is not OSP-implementable. Nonetheless, it is (Maskin) monotonic \citep{sonmez1996implementation,takamiya2001coalition}. Similarly, deferred acceptance satisfies monotonicity \citep{kojima2010axioms} but fails to be OSP-implementable \citep{ashlagi2018stable}.

\section{Concluding remarks}\label{conclusion}

Besides being easily implementable and commonly used, the Random Priority Rule ($R\!P$) is the only random assignment rule satisfying probabilistic monotonicity, equal-treatment-of-equals, and ex post efficiency (Theorem \ref{th.ax}). Probabilistic monotonicity has not been considered in the literature so far, but constitutes a natural strengthening of strategy-proofness (Proposition \ref{prop.decomp}) on the domain of assignment problems. 

The fact that our characterisation of $R\!P$ relies on strengthening strategy proofness may be considered particularly compelling given that strategy proofness, together with equal-treatment-of-equals and ex post efficiency, is insufficient to characterise $R\!P$ \citep{erdil2014strategy,basteck2024constrained}\footnote{The construction by \cite{erdil2014strategy} relies on the existence of outside options; \cite{basteck2024constrained} provide a construction without outside options.} -- there are other random assignment rules on the domain of ordinal preferences that satisfy strategy-proofness, equal-treatment-of-equals and ex post efficiency but are not welfare equivalent to $R\!P$. By Theorem \ref{th.ax} these rules must then violate weak object-wise non-bossiness.

The characterization of the Random Priority Rule given in Theorem \ref{th.ax} is the first known characterization for settings with multiple copies of objects, such as for example multiple seats at different schools.\footnote{\cite{pycia2024random} assume an equal number of objects and agents with all objects in unit supply.} 
Moreover, as spelled out in Corollary \ref{cor.eq}, it implies and generalizes several equivalence results according to which randomizing over certain appealing deterministic assignment rules yields  $R\!P$ \citep{abdulkadirouglu1998random,pathak2011lotteries,lee2011equivalence,carroll2014general,bade2020random}. Instead, Theorem \ref{th.ax} characterizes $R\!P$ directly among all random assignment rules, rather than among those that are constructed as randomizations over deterministic rules -- an important distinction since the set of random assignment rules satisfying a particular property may be strictly larger than the set of rules that arises from randomizations over deterministic rules, each of which satisfies the property in question.\footnote{For example, \cite{pycia2015decomposing} show that the set of strategy-proof random assignment rules is strictly larger than the set or randomizations over deterministic strategy-proof assignment rules.}

\appendix
\section{}
\subsection{Proof of Theorem \ref{th.ax}}\label{app.proof}

\begin{proof}{}
\indent $(4)\implies (1)$: We first establish that the random priority rule $R\!P$ satisfies probabilistic monotonicity: this follows from the fact that $R\!P$ is a convex combination of the $n!$ serial dictatorships $f^\rhd$ associated with all possible strict priority orders  over agents, $\rhd\in \Pi$. Each such deterministic rule $f^\rhd$ constitutes a Pareto efficient deferred acceptance rule (with objects' priorities over agents given by $\rhd$) and hence satisfies Maskin monotonicity \citep{kojima2010axioms}. As probabilistic (Maskin) monotonicity is preserved for convex combinations, $R\!P$ satisfies probabilistic monotonicity.

\smallskip
$(1)\implies (2)$: Similarly, strategy-proofness of $R\!P$ follows from the fact that each $f^\rhd$ is strategy-proof. For weak object-wise non-bossiness, consider any swap of two objects in some agents preference order, i.e., $R\in \mathcal{R}$ (resp. $ \mathcal{\underline{R}}$) and $R'_i\in N_{R_i}$ (resp. $ N_{R_i}\cap\mathcal{\underline{R}}$) such that $xP_iy$ and $yP'_ix$. Now, suppose that $f^{iz}(R)=f^{iz}(R'_i,R_{-i})$ for some arbitrary $z\in O$. We have to show that $f_{\mu}(R)=f_{\mu}(R'_i,R_{-i})$ for all $\mu\in \{\mu'\in \mathcal{M}|\mu'_i=z\}$.

If $z=y$, then $(R_i',R_{-i})$ is a $\mu$-monotonic transformation of $R$, so we have $f_{\mu}(R)\leq f_{\mu}(R'_i,R_{-i})$ for each $\mu$ with $\mu_i=z$. Moreover, if for some $\mu'$ with $\mu'_i=z$ we had $f_{\mu'}(R)<f_{\mu'}(R'_i,R_{-i})$, then $\sum_{\mu: \mu_i=z}f_{\mu}(R)<\sum_{\mu: \mu_i=z}f_{\mu}(R'_i,R_{-i})$ contradicting $f^{iz}(R)=f^{iz}(R'_i,R_{-i})$. Analogously if $z=x$.

If instead $z\neq x,y$, then any $\mu$ with $\mu_i=z$, $(R_i',R_{-i})$ is a $\mu$-monotonic transformation of $R$ and vice versa. Hence $f_{\mu}(R)\leq f_{\mu}(R'_i,R_{-i})\leq f_{\mu}(R)$ by probabilistic monotonicity.

\smallskip
$(2)\implies (3)$: For pairwise responsiveness, consider any swap of two objects in some agents preference order, i.e., $R\in \mathcal{R}$ (resp. $ \mathcal{\underline{R}}$) and $R'_i\in N_{R_i}$ (resp. $ N_{R_i}\cap\mathcal{\underline{R}}$) such that $xP_iy$ and $yP'_ix$. By strategy-proofness, this does not affect $i$'s assignment probability for any other object $z\neq x,y$. 
By weak object-wise non-bossiness we have $f_\mu(R)=f_\mu(R'_i,R_{-i})$ for all $\mu\in \mathcal{M}$ s.t. $\mu_i=z$. Thus, $f$ satisfies pairwise responsiveness.

\smallskip

$(3)\implies (4)$: First, recall the notation $\mathcal{M}^\neq(R):=\{\mu\in \mathcal{M}| f_\mu(R)\neq R\!P_\mu (R)\}$. Second, by ex post efficiency of $f$ and $R\!P$, $\mu\in\mathcal{M}^\neq(R)$  implies that $\mu$ is efficient at $R$.  Third, recall that any assignment $\mu$, efficient at $R$, is the outcome of some deterministic priority rule at $R$ where agents choose their most preferred object among those still available according to some deterministic priority order.

Now, towards a contradiction, suppose $f\neq R\!P$, i.e., that there exists some $R$ and $\mu\in \mathcal{M}^{\neq}(R)$. We argue by induction. Suppose  that for $k\geq 0$ we have a triple $\mathcal{R}^k \subseteq \mathcal{R}$, $O_k\subseteq O$, and $N_k\subseteq N$  such that,

\begin{itemize}
\item there exists some $R\in \mathcal{R}^k$ and $\mu\in \mathcal{M}$ such that $f_\mu(R)\neq R\!P_\mu(R)$,
\item $|O_k|=k$ (so increasing in $k$),
\item all agents in $N_k$ rank all objects in $O_k$ at the bottom and in the same order, i.e., 
\begin{itemize}
\item $\forall j\in N_k, o\nin  O_k, o'\in O_k: oP_jo'$,
\item $\forall j,l\in N_k, o,o'\in O_k: oR_jo' \Leftrightarrow oR_lo'$,
\end{itemize}
\item for all other agents, $i\in N\backslash N_k$, preferences are fixed and they receive the same object from $O_k$ under any assignment for which $f$ and $R\!P$ differ, i.e.,
\begin{itemize}
\item $\forall R,R'\in \mathcal{R}^k: R_i=R'_i$,
\item $\forall R\in  \mathcal{R}^k, \mu,\mu'\in \mathcal{M}^\neq(R): \mu_i=\mu'_i\in O_k$.
\end{itemize}

\end{itemize}

We want to show that there exists an analogous triple for $k'=k+1$ with $O_{k+1}\supset O_k$ and $N_{k+1}\subseteq N_k$ until we find a triple that allows us to derive a contradiction: intuitively, for any $R\in \mathcal{R}^k$ where there is a difference between $f$ and $R\!P$, that difference must be limited to the assignment of objects to agents in $N_k$ -- and since their preferences are increasingly aligned as $O_k$ expands, we eventually arrive at a contradiction to equal-treatment-of-equals.

For the induction base, we consider $k=0$, $\mathcal{R}^k=\mathcal{R}$, $O_k=\{\}$, and $N_k=N$ -- and are relieved to find that there is nothing to show as the existence of some $R\in \mathcal{R}$ and $\mu\in \mathcal{M}$ such that $f_\mu(R)\neq R\!P_\mu(R)$ was the starting assumption of our proof by contradiction.

For the induction step, suppose we have $\mathcal{R}^k$, $O_k$, and $N_k$ as described above with $|O_k|<|O|-1$ (so that for agents in $N_k$, preferences are not yet completely determined) and $|N_k|\geq 1$ (so that there is some agent whose preferences are not yet completely determined). Consider any $R\in \mathcal{R}^k$ and $\mu\in \mathcal{M}^{\neq}(R)$. Since $\mu\in \mathcal{M}^{\neq}(R)$, agents in $N\backslash N_k$ are assigned objects in~$O_k$. Moreover $\mu$ is efficient and hence the outcome under some deterministic priority rule at $R$. Since agents in $N_k$ rank objects in $O_k$ last, they will choose objects in $O_k$ only after all objects in $O\backslash O_k$ are assigned. Let $i\in N_k$ be the last agent to choose an object in $O\backslash O_k$ under the associated priority order and denote the object as $y=\mu_i$. Then, for all $j\in N_k$ who choose before $i$, we have $\mu_j R_j y P_j o$  for all $o\in O_k$ while for all $j\in N_k$ who choose after $i$ we have $y P_j \mu_j$, $\mu_j\in O_k$. Set $N_k^y=\{j'\in N_k|\mu_j=y\}$. 

Now let all $j\in N_k\backslash N_k^y$ move $y$ down in their preference order to a position immediately above objects in $O_k$, and do so in a series of pairwise swaps of adjacent objects. Call this new profile $R'$. Since swaps did not involve $\mu_j$, pairwise responsiveness implies $f_\mu(R')=f_\mu(R)\neq R\!P_{\mu}(R)=R\!P_{\mu}(R')$. Hence, $f$ and $R\!P$ differ for some profile in a further restricted domain where weakly more than $|N_k|-|q_y|$ agents in $N_k$ agree on the ranking of objects $O_k\cup\{y\}$ at the bottom, namely  \begin{align*}
\mathcal{\tilde{R}}^k=\{R\in \mathcal{R}^k|\forall j\in N_k\backslash N_k^y, o\nin  O_k\cup\{y\}, o'\in O_k\cup\{y\}: oP_jo'\\
 \quad \text{and}  \quad
\forall j,l\in N_k\backslash N_k^y, o,o'\in O_k\cup\{y\}: oR_jo' \Leftrightarrow oR_lo'\}.
\end{align*}

\noindent To further narrow down the subdomain, denote agents in $N_k^y$ as $N_k^y=\{j_1,\cdots,j_{|N_k^y|}\}$ and consider $j_1$. There are two cases:

\medskip

Case (i): Suppose there exists $R\in \mathcal{\tilde{R}}^k$ and $\mu'\in \mathcal{M}^{\neq}(R)$ such that $\mu'_{j_1} P_{j_1} y$ or $\mu'_{j_1}\in O_k$. Then we may move $y$ down in $j_1$'s preference order, to a position immediately above objects in $O_k$, in a series of pairwise swaps of adjacent objects. Call the new profile $R^{k.1}$. Since the swaps did not involve $\mu'_{j_1}$, pairwise responsiveness implies that $f_\mu'(R^{k.1})\neq R\!P_{\mu'}(R^{k.1})$. Hence, $f$ and $R\!P$ differ for some profile on 
\begin{align*}
\mathcal{\tilde{R}}^{k.1}=\{R\in \mathcal{\tilde{R}}^k\subseteq \mathcal{R}^k|\forall  o\nin  O_k\cup\{y\}, o'\in O_k\cup\{y\}: oP_{j_1}o'\\
 \quad \text{and}  \quad
\forall j\in N_k\backslash N_k^y, o,o'\in O_k\cup\{y\}: oR_{j}o' \Leftrightarrow oR_{j_1}o'\}.
\end{align*}

\medskip

Case (ii): Suppose instead that for any $R\in \mathcal{\tilde{R}}^k$, $\mu' \in \mathcal{M}^{\neq}(R)$ 
implies $yR_{j_1}\mu'_i$ and $\mu'_{j_1}\nin O_k$. If $y P_{j_1} \mu'_i$,  
we may move $y$ to the top of ${j_1}$'s preference order, in a series of pairwise swaps. Call the new profile $R'$. 
Since the swaps did not involve $\mu'_{j_1}$, we have
$f_{\mu'}(R')\neq R\!P_{\mu'}(R')$. Hence, by ex-post efficiency of $f$ and $R\!P$,
$\mu'$ is efficient at $R'$. Hence, under $\mu'$,  all copies of $y$ must be assigned to agents other than $j$. Since $y\nin O_k$, it must be assigned to agents in $N_k$ and since $|N_k^y\backslash\{j_1\}|<q_y$ there must be some agent $j\in N_k\backslash N_k^y$ who is assigned $y$ (and ranks $y$ immediately above all objects in $O_k$, i.e., below all other objects in $O\backslash O_k$) -- but then there is a Pareto-improving trade between $j$ and $j_1$, a contradiction. Thus, for any 
$R\in \mathcal{\tilde{R}}^k$ and $\mu'\in \mathcal{M}^{\neq}(R)$ we have $\mu'_{j_1}=y$. Now, pick any $R^{k.1}\in \mathcal{\tilde{R}}^k$ 
for which $\mathcal{M}^{\neq}(R^{k.1})\neq \emptyset$ and define
\[\mathcal{\tilde{R}}^{k.1}=\{R\in \mathcal{\tilde{R}}^k\subseteq \mathcal{R}^k| R_{j_1}=R^{k.1}_{j_1}\}.\]

Next, take $j_2$. Again one of two cases applies, analogous to the two cases above where the initial subdomain is now $\mathcal{\tilde{R}}^{k.1}$ instead of $\mathcal{\tilde{R}}^k$ and the new subdomain is $\mathcal{\tilde{R}}^{k.2}$. Continue in this way to arrive at $\mathcal{R}^{k+1}:=\mathcal{\tilde{R}}^{k.|N_k^y|}$. To complete the induction step, set $O_{k+1}:=O_k\cup\{y\}$ and remove all agents for which Case (ii) applied from $N_{k}$ to arrive at $N_{k+1}$. 

\medskip

Finally, repeat the induction step until we arrive at $\mathcal{R}^{\bar{k}}$, $O_{\bar{k}}$, and $N_{\bar{k}}$ for which $\bar{k}=|O|-1$  or for which $|N_{\bar{k}}|= 0$.

\medskip

If $\bar{k}=|O|-1$, take any $R\in \mathcal{R}^{\bar{k}}$ for which $\mathcal{M}^{\neq}(R)$ is non-empty and consider $\mu,\mu'\in \mathcal{M}^{\neq}(R)$ for which $f_\mu(R)>R\!P_\mu(R)$ and $f_{\mu'}(R)<R\!P_{\mu'}(R)$. By construction, $\mu_j=\mu'_j$ for all $j\nin N_{\bar{k}}$. Hence, $\mu$ and $\mu'$ differ only in the assignment of agents in $N_{\bar{k}}$. However, all these agents rank the same set of $|O|-1$ objects at the bottom, and do so in the same order, so that $R_i=R_j$ for all $i,j\in N_{\bar{k}}$. Thus, equal-treatment-of-equals for $R\!P$ implies $f_\mu(R)>R\!P_\mu(R)=R\!P_{\mu'}(R)>f_{\mu'}(R)$, which violates equal-treatment-of-equals for $f$ -- a contradiction.

\medskip

So instead we must have $|N_{\bar{k}}|= 0$. In this case preferences of all agents are completely determined, i.e., $\mathcal{R}^{\bar{k}}=\{R\}$ is a singleton. Moreover since all agents are in $N\backslash N_{\bar{k}}$,  we find that for every $\mu\in \mathcal{M}^\neq(R)$, all agents receive the same object -- i.e.,  $\mathcal{M}^{\neq}(R)=\{\mu\}$ is a singleton as well. But if $f_\mu(R)\neq R\!P_\mu(R)$ (while for all other assignments $\mu'$ we have $f_{\mu'}(R)=R\!P_{\mu'}(R)$), this contradicts the fact that under both $f(R)$ and $R\!P(R)$, the total probability of all assignments must sum to 1.

\medskip
This completes the proof for the case of the unrestricted domain of strict preferences $\mathcal{R}$. If we consider random assignment rules on $\underline{\mathcal{R}}$ where all agents rank the same (null) object $\emptyset$ last, the proof proceeds analogous, except that we may choose the induction base  as $k=0$, $\mathcal{R}^k=\mathcal{R}$, $O_k=\{\emptyset\}$, and $N_k=N$, i.e., we start `one step ahead' in our attempt to narrow down the domain of preferences by making preferences as similar as possible for agents in $N_k$ (while preserving $f\neq R\!P$).
\end{proof}

\subsection{Welfare equivalent random assignments}\label{app.axioms.we}
Since agents' have preferences over their own assigned object -- but are indifferent between assignments apart from that -- the literature on random assignments has often focussed on agents' individual object assignment probabilities, rather than on the probabilities with which complete deterministic assignments are chosen.

Recall that $p^{ia}$ denotes the probability with which agent $i$ is assigned object $a$ under a given random assignment $p$  and that $p^i=(p^{ia})_{a\in O}$ denotes $i$'s individual random assignment. Two random assignments $p,q$ are said to be (ex ante) welfare-equivalent if $p^i=q^i$ for all $i\in N$, i.e., if they agree on agents' individual object assignment probabilities. Similarly, two random assignment rules, $f$ and $g$, are welfare equivalent, if $f(R)$ and $g(R)$ are welfare equivalent for every $R$.\footnote{Some papers directly define  a random assignment as a matrix $(p^{ia})_{i\in N, a\in O}$ rather than as a convex combination of deterministic assignments. To the extend that on is interested in analysing random assignments only up to welfare equivalence, this is well defined as any such matrix $(p^{ia})_{i\in N, a\in O}$ can be represented as a convex combination of deterministic assignments by (a slight generalization of) the Birkhoff-von Neumann Theorem \citep{birkhoff1946three}.} 

Welfare-equivalence gives rise to equivalence classes of random assignments that we denote as $[p]:=\{q\in \Delta(\mathcal{M})|\forall i\in N\!: \, q_i=p_i\}$ and that can be identified with the collection of individual object assignment probabilities $(p^{ia})_{i\in N, a\in O}$. A collection of individual object assignment probabilities $[p]$ is then typically said to be ex post efficient given $R$, if there \emph{exists} some random assignment $q\in [p]$ that is ex-post-efficient with respect to $R$, i.e., if there exist a convex combination of deterministic, Pareto-efficient assignments giving rise to the individual object assignment probabilities $(p^{ia})_{i\in N, a\in O}$. 

Moreover, for given $R$, $[p]$ is typically said to satisfy equal-treatment-of-equals if $R_i=R_j$ implies $p^{ia}=p^{ja}$ for all $i,j\in N$, $a\in O$. Equivalently, we can define equal-treatments-of-equals for a collection of individual object assignment probabilities in analogy to ex post efficiency, i.e., say that the property holds for $[p]$ if it holds for some $q^*\in[p]$:

\begin{claim}\label{claim.ete}
Consider a preference profile $R\in \mathcal{R}$ and a collection of individual assignment probabilities $[p]$. The following statements are equivalent:
\begin{itemize}
\item[(i)] $\forall i,j\in N, a\in O: \quad R_i=R_j \implies p^{ia}=p^{ja}$.
\item[(ii)] there exists a random assignment $q^*\in [p]$  satisfying equal-treatment-of-equals.
\end{itemize}
\end{claim}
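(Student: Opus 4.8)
The plan is to prove the two implications separately. The reverse implication (ii) $\Rightarrow$ (i) is routine, so I would dispatch it first, and then devote the effort to the forward implication (i) $\Rightarrow$ (ii), which I would carry out by a symmetrization argument. For (ii) $\Rightarrow$ (i), suppose $q^*\in[p]$ satisfies equal-treatment-of-equals and fix $i,j\in N$ with $R_i=R_j$ together with some $a\in O$. I would consider the involution $\sigma$ on $\mathcal{M}$ that swaps the objects of $i$ and $j$, i.e.\ $\sigma(\mu)_i=\mu_j$, $\sigma(\mu)_j=\mu_i$, and $\sigma(\mu)_k=\mu_k$ for $k\neq i,j$. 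Equal-treatment-of-equals gives $q^*_\mu=q^*_{\sigma(\mu)}$ for every $\mu$; since $\sigma$ is a bijection and $\mu_i=a\Leftrightarrow\sigma(\mu)_j=a$, summing over $\{\mu:\mu_i=a\}$ yields $p^{ia}=(q^*)^{ia}=(q^*)^{ja}=p^{ja}$, which is exactly (i).

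For the forward implication, let $G$ be the set of all bijections $\pi:N\to N$ with $R_{\pi(i)}=R_i$ for all $i$; this is a finite group (the product of the symmetric groups on the indifference classes of $R$). I would pick any $q\in[p]$, recycle the action $\mu\cdot\pi$ from Corollary \ref{cor.eq} (so $(\mu\cdot\pi)_k=\mu_{\pi(k)}$), define the relabeled assignments $q^\pi$ by $q^\pi_\mu:=q_{\mu\cdot\pi^{-1}}$, and set $q^*:=\tfrac{1}{|G|}\sum_{\pi\in G}q^\pi$. The first thing to check is that $q^*\in[p]$: a direct reindexing shows $(q^\pi)^{ia}=q^{\pi(i)a}$, and because $\pi\in G$ means $R_{\pi(i)}=R_i$, hypothesis (i) gives $q^{\pi(i)a}=p^{\pi(i)a}=p^{ia}$. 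Hence each $q^\pi$ lies in $[p]$, and since $[p]$ is convex (it is cut out of the simplex by linear equalities), so is the average $q^*$. This is precisely the step where (i) is indispensable.

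It then remains to show $q^*$ satisfies equal-treatment-of-equals, and here the group structure does the work. Since relabeling is a group action ($(q^\pi)^\sigma=q^{\pi\sigma}$), for any $\sigma\in G$ the map $\pi\mapsto\pi\sigma$ permutes $G$, so $(q^*)^\sigma=\tfrac{1}{|G|}\sum_{\pi\in G}q^{\pi\sigma}=q^*$; equivalently $q^*_{\mu\cdot\sigma}=q^*_\mu$ for all $\mu$ and all $\sigma\in G$. Now given $i,j$ with $R_i=R_j$ and assignments $\mu,\mu'$ differing only by the swap $\mu_i=\mu'_j$, $\mu_j=\mu'_i$, the transposition $\tau=(i\,j)$ lies in $G$ precisely because $R_i=R_j$, and one checks $\mu'=\mu\cdot\tau$, whence $q^*_{\mu'}=q^*_\mu$ as required. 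The main obstacle is purely the index bookkeeping: one must fix the conventions for $\mu\cdot\pi$ so that relabeling by $\pi$ converts the individual probabilities of agent $i$ into those of $\pi(i)$ (making (i) applicable and keeping $q^\pi\in[p]$) and so that the swaps appearing in equal-treatment-of-equals are exactly the order-two elements of $G$ (making $G$-invariance of the average deliver the axiom). No deeper difficulty arises, as $[p]$ is convex and $G$ is finite.
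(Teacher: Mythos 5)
Your proposal is correct and follows essentially the same route as the paper: the (ii)$\Rightarrow$(i) direction is the same pairing/reindexing computation, and the (i)$\Rightarrow$(ii) direction is the same symmetrization over permutations of agents with identical preferences, with hypothesis (i) used exactly where the paper uses it (to keep the symmetrized lottery inside $[p]$). The only cosmetic difference is that you average over the full group $G$ in one step with explicit group-action bookkeeping, whereas the paper symmetrizes one indifference class $M$ at a time.
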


\begin{proof}[Proof of Claim \ref{claim.ete}]
$(ii)\implies (i)$: Take $q\in[p]$ such that $q$ satisfies-equal-treatement-of-equals and suppose there are $i,j\in N$ for which $R_i=R_j$. Now for any $a$, \[q^{ia}=\sum_{\mu:\mu_i=a}q_\mu=\sum_{b\in O}\left(\sum_{\mu:\mu_i=a, \mu_j=b}q_\mu\right)=\sum_{b\in O}\left(\sum_{\mu:\mu_i=b, \mu_j=a}q_\mu\right)=\sum_{\mu:\mu_j=a}q_\mu=q^{ja}\]
where the central equality follows from the definition of equal-treatment-of-equals of random assignments.

$(i)\implies(ii)$: Suppose that there exist a set of agents $M\subseteq N$ with identical preferences, i.e., $R_i=R_j$ for all $i,j\in M$. Let $q\in \Delta(M)$ be a random assignment, i.e., a probability distribution over deterministic assignments giving rise to individual object assignment probabilities $q^{ia}=p^{ia}$ for all $i\in N, a\in O$. Let $\Pi$ denote the set of all permutations of agents in $M$.\footnote{A permutation $\pi$ on $M$ is a bijection $\pi: M\rightarrow M$.} Now, for each assignment $\mu$ in the support of $q$, and each $\pi\in \Pi$, define $\mu^\pi$ as the assignment where $\mu^\pi_i=\mu_{\pi(i)}$. Last, construct the random assignment $q_M$ from $q$ by replacing each assignment $\mu$ with weight $q_\mu$ by the $|M|!$ many assignments $\mu^{\pi}$, $\pi\in \Pi$, each with weight $\frac{q_\mu}{|M|!}$. Then, by construction, $q_M$ satisfies equal-treatment-of-equals among all agents in $M$. Moreover $q_M^i=q^i$ for all  $i\in M$ as well as all $i\notin M$ (since for them the permutation did not change their assignment, i.e., $\mu_i=\mu^{\pi}_i$ for all $\mu$ and $\pi\in \Pi$). Thus $q_M\in [p]$. Repeating the construction for all sets of agents with identical preferences $M'\subseteq N$, $M'\neq M$, we arrive at a random assignment $q^*\in [p]$ that satisfies equal-treatment-of-equals for all $i,j\in N$.

\end{proof}

Hence, if one wants to distinguish between random assignments only up to welfare-equivalence, defining equal-treatment-of-equals with respect to the probability distribution over deterministic assignments, as we do in Section \ref{ran.ass}, is no more restrictive than defining equal-treatments-of-equals for a collection of individual assignment probabilities -- whenever the latter satisfies equal-treatment-of-equals, we could just as well consider some decomposition into deterministic assignments that satisfies the equal-treatments-of-equals as defined in Section \ref{ran.ass}.

\bibliographystyle{plainnat}
\bibliography{./library}

\end{document}